\theoremstyle{plain}
\newtheorem{theorem}{Theorem}[section]
\theoremstyle{plain}
\newtheorem{lemma}[theorem]{Lemma}
\newtheorem{proposition}[theorem]{Proposition}
\theoremstyle{definition}
\newtheorem{definition}{Definition}[section]
\numberwithin{equation}{section}
\title[Optimal portfolios, consumption and  insurance]
 {Optimal investment-consumption and life insurance with capital constraints}
\date{\today}
\begin{document}

\author{Calisto Guambe}
\address{Department of Mathematics and Applied Mathematics, University of Pretoria, 0002, South Africa }
\address{Department of Mathematics and Computer Sciences, Eduardo Mondlane University, 257, Mozambique}

\email{calistoguambe@yahoo.com.br}

\author{ Rodwell Kufakunesu }

\address{Department of Mathematics and Applied Mathematics
, University of Pretoria, 0002, South Africa}

\email{rodwell.kufakunesu@up.ac.za}

\keywords{
 Optimal investment-consumption-insurance, Jump-diffusion, Martingale method, Incomplete market, Option based portfolio insurance.}


\begin{abstract}
The aim of this paper is to solve an optimal investment, consumption and life insurance problem when the investor is restricted to capital guarantee. We consider an incomplete market described by a jump-diffusion model with stochastic volatility. Using the martingale approach, we prove the existence of the optimal strategy and the optimal martingale measure and we obtain the explicit solutions for the power utility functions.
\end{abstract}

\maketitle
\section{Introduction}
Optimal consumption-investment problem by Merton \cite{Merton} ushered a lot of extensions. In 1975, Richard \cite{Richard} extended for the first time this problem to include life insurance decisions. Other references include (Huang {\it et al.} \cite{Huang}, Pliska and Ye \cite{Pliska}, Liang and Guo \cite{liang}). Recently, Kronborg and Steffensen \cite{steffensen}
extended this problem to include capital constraints, previously introduced by Tepl\'a \cite{Tepla} and El Karoui {\it et. al.} \cite{elkaroui}. Most of the references mentioned above solved the problem under a diffusion framework.

However, as was pointed out by Merton and many empirical data, the analysis of the price evolution reveals some sudden and rare breaks (jumps) caused by external information flow. These behaviours constitute a very real concern of most investors and they can be modeled by a Poisson process, which has jumps occurring at rare and unpredictable time. For detailed information see e.g., Jeanblanc-Picque and Pontier \cite{pontier}, Runggaldier \cite{runggaldier}, Daglish \cite{daglish}, Oksendal and Sulem \cite{Oksendal}, Hanson \cite{hanson} and references therein.

In this paper, we consider a jump-diffusion problem with stochastic volatility as in Mnif \cite{mnif}. In his paper, Mnif \cite{mnif} solved the portfolio optimization problem using the dynamic programming approach. Applying this technique in a jump-diffusion model, the Hamilton-Jacobi-Bellman (HJB) equation associated to the problem is nonlinear, which in general the explicit solution is not provided. To prove the existence of a smooth solution, he reduced the nonlinearity of the HJB equation to a semi-linear equation under certain conditions. We use a martingale approach developed by Karatzas {\it et al.} \cite{karatzas87} and Karatzas and Shreve \cite{karatzas98} in a diffusion process to solve the unrestricted problem. Considering a jump-diffusion model, a market is incomplete and consequently we have many martingale measures. We obtain the optimal investment, consumption and life insurance strategy by the convex optimization method. This method allow us to characterize the optimal martingale measure for the utility functions of the power type. In the literature, this method has also been applied by Castaneda-Leyva and Hern\'andez-Hern\'andez \cite{hernandez} in an optimal investment-consumption problem considering a stochastic volatility model described by diffusion processes. Similar works include (Liang and Guo \cite{liang}, Michelbrink and Le \cite{michelbrink} and references therein).

The optimal solution to the restricted problem is derived from the unrestricted optimal solution, applying the option based portfolio insurance (OBPI) method developed by El Karoui {\it et al.} \cite{elkaroui}. The OBPI method consists in taking a certain part of capital and invest in the optimal portfolio of the unconstrained problem and the remaining part insures the position with American put. We prove the admissibility and the optimality of the strategy.

The structure of this paper is organized as follows. In Section 2, we introduce the model and problem formulation of the Financial and the Insurance markets. Section 3, we solve the unconstrained problem. In Section 4, we solve the constrained problem and prove the admissibility of our strategy. Finally, in Section 5 we give a conclusion.

\section{The Financial Model}

We consider two dimensional Brownian motion $W=\{W_1(t);W_2(t), \ 0\leq t\leq T\}$ associated to the complete filtered probability space $(\Omega^W,\mathcal{F}^W,\{\mathcal{F}_t^W\},\mathbb{P}^W)$ such that $\{W_1(t),W_2(t)\}$ are correlated with the correlation coefficient $|\varrho|<1$, that is, $dW_1(t)\cdot dW_2(t)=\varrho dt$.  Moreover, we consider a Poisson process $N=\{N(t),\mathcal{F}^N(t),0\leq t\leq T\}$ associated to the complete filtered probability space $(\Omega^N,\mathcal{F}^N,\{\mathcal{F}^N_t\},\mathbb{P}^N)$ with intensity $\lambda(t)$ and a $\mathbb{P}^N$-martingale compensated poisson process
\begin{equation*}
    \tilde{N}(t):=N(t)-\int_0^t\lambda(t)dt\,.
\end{equation*}
We assume that the intensity $\lambda(t)$ is Lebesgue integrable on $[0,T]$.

Consider the product space:
$$(\Omega,
\mathcal{F},\{\mathcal{F}_t\}_{0\leq t\leq T},\mathbb{P}):=(\Omega^W\times\Omega^N,\mathcal{F}^W\otimes\mathcal{F}^N, \{\mathcal{F}^W_t\otimes\mathcal{F}^N_t\},\mathbb{P}^W\otimes\mathbb{P}^N)\,,
$$ where
$\{\mathcal{F}_t\}_{t\in[0,T]}$ is a filtration satisfying the usual
conditions (Protter \cite{Protter}).  On this space, we assume that $W$ and $N$ are independent processes.

The financial market consists of a risk-free asset $B:=(B(t)_{t\in[0,T]})$, a non-tradable index $Z:=(Z(t)_{t\in[0,T]})$ which can be thought as an external economic factor, such as a temperature, a loss index or a volatility driving factor and a risky asset $S:=(S(t))_{t\in[0,T]}$ correlated with $Z(t)$. This market is defined by the following jump-diffusion model:
\begin{eqnarray}\label{risk-free}
  dB(t) &=& r(t)B(t)dt,\quad B(0)=1,    \\\label{nontradable}
  dZ(t)&=& \eta(Z(t))dt+ dW_1(t),  \\ \label{risky}
  dS(t) &=&
S(t)\Bigl[\alpha(t,Z(t))dt+\beta(t,Z(t))dW_1(t)+\sigma(t,Z(t))dW_2 \\  \nonumber
&& \ \ \ \ \ \ \ \ +\gamma(t,Z(t))dN(t)\Bigl]\,, \ \ S(0)=s>0\,,
\end{eqnarray}
 where $r(t)$ is the risk-free interest rate, $\alpha(t,z),\beta(t,z),\sigma(t,z)$ and $\gamma(t,z)>-1$ denote the mean rate of return, volatility rates and the dispersion rates respectively. With the latter condition and the continuity of $Z$, we guarantee that \eqref{risky} is well defined.

We assume that the market parameters satisfy the following conditions:\\
\begin{itemize}
\item[${\bf(A_1)}$] The functions  $r:[0,T]\rightarrow\mathbb{R}$;  $\alpha,\beta,\sigma,\gamma:[0,T]\times \mathbb{R}\rightarrow\mathbb{R}$ belong to $C^{1,1}([0,T]\times\mathbb{R})$ with bounded derivatives.  Moreover, for all $(t,z)\in[0,T]\times\mathbb{R}$ and $K>0$,
 \begin{eqnarray*}
   |\alpha(t,z)|+|\beta(t,z)| &\leq& K  \\
   |\sigma(t,z)| &\leq& K(1+|z|)\,.
 \end{eqnarray*}
\end{itemize}
To ensure the existence and uniqueness of the solution to \eqref{nontradable}, we assume a Lipschitz condition on the $\mathbb{R}$-valued function $\eta$:\\
\begin{itemize}
\item[${\bf(A_2)}$] There exists a positive constant $C$ such that $$ |\eta(y)-\eta(w)|\leq C|y-w|\,, \ \ \ \ \ \  y,w\in\mathbb{R}\,.$$
\end{itemize}
Under the above assumption, the solution to the stochastic differential equation (SDE) \eqref{nontradable} is given by
\begin{equation}\label{nontradabler}
Z(t)=z+\int_0^t\eta(Z(s))ds+\int_0^tdW_1(s)\,.
\end{equation}

 Let us consider a policyholder  whose lifetime is a
nonnegative random variable $\tau$ defined on the probability space $(\Omega,
\mathcal{F},\mathbb{P})$ and independent of the filtration $\mathcal{F}_t$.
Consider that $\tau$ has a probability density function $f(t)$ and the distribution function given by
$$
F(t):=\mathbb{P}(\tau<t)=\int_0^tf(s)ds\,.
$$
The probability that the life time $\tau>t$ is given by
$$
\bar{F}(t):=\mathbb{P}(\tau\geq t\mid\mathcal{F}_t)=1-F(t)\,.
$$
The instantaneous force of mortality $\mu(t)$ for the policyholder to be alive at time $t$ is defined by
\begin{eqnarray*}
  \mu(t) &:=& \lim_{\Delta t\rightarrow0}\frac{\mathbb{P}(t\leq\tau<t+\Delta t|\tau\geq t)}{\Delta t} \\
   &=& \lim_{\Delta t\rightarrow0}\frac{\mathbb{P}(t\leq\tau<t+\Delta t)}{\Delta t\mathbb{P}(\tau\geq t)} \\
   &=&  \frac{1}{\bar{F}(t)}\lim_{\Delta t\rightarrow0}\frac{F(t+\Delta t)-F(t)}{\Delta t} \\
   &=& \frac{f(t)}{\bar{F}(t)}=-\frac{d}{dt}(\ln(\bar{F}(t)))\,.
\end{eqnarray*}

 Then, the conditional survival probability of the policyholder is given by
 \begin{equation}\label{survival}
\bar{F}(t)=\mathbb{P}(\tau>t|\mathcal{F}_t)=\exp\left(-\int_0^t\mu(s)ds\right),
 \end{equation}
 and the conditional survival probability density of the death of the policyholder by
 \begin{equation}\label{death}
    f(t):=\mu(t)\exp\left(-\int_0^t\mu(s)ds\right).
 \end{equation}

Let $c(t)$ be the consumption rate of the policyholder, $\pi(t)$ be the amount of the policyholder's wealth invested in the risky asset $S$ and $p(t)$ be the sum insured to be paid out at time $t\in[0,T]$ for the life insurance upon the wage earner's death before time $T$. We assume that the strategy $(c(t),\pi(t),p(t))$ satisfy the following definition:
\begin{definition}
The consumption rate $c$ is measurable, $\mathcal{F}_t$-adapted process, nonnegative and
$$
\int_0^Tc(t)dt<\infty, \ \  \hbox{a.s.}
$$
The allocation process $\pi$ is an $\mathcal{F}_t$-predictable process with
$$\int_0^t\pi^2(t)dt<\infty, \ \ \hbox{a.s.}
$$
The insurance process $p$ is measurable, $\mathcal{F}_t$-adapted process, nonnegative and
$$ \int_0^Tp(t)dt<\infty, \ \ \hbox{a.s.}
$$

\end{definition}

Suppose that the policyholder receives a deterministic labor income of rate $\ell(t)\geq0$, $\forall t\in[0,\tau\wedge T]$ and that the shares are divisible and can be traded continuously. Furthermore, we assume that there are no transaction
costs, taxes or short-selling constraints in the trading, then after some calculations, the wealth process
$X(t)\,, \ t\in[0,\tau\wedge T]$ is defined by the following SDE:

 \begin{eqnarray}\label{wealth}
   dX(t) &=& \left[(r(t)+\mu(t))X(t)+\pi(t)(\alpha(t,Z(t))-r(t))+\ell(t)-c(t)-\mu(t)p(t)\right]dt \\ \nonumber
&+&\pi(t)\beta(t,Z(t)) dW_1(t)+\pi(t)\sigma(t,Z(t))dW_2(t)+\pi(t)\gamma(t,Z(t)) dN(t)\,, \\ \nonumber
   X(0) &=& x>0,
 \end{eqnarray}
where $Z(t)$ is given by \eqref{nontradabler} and $\tau\wedge
T:=\min\{\tau, T\}$.

The expression $\mu(t)(p(t)-X(t))dt$ from the wealth process \eqref{wealth}, corresponds to the risk premium rate to pay for the life insurance $p$ at time $t$. Notice that choosing $p>X$ corresponds to buying a life insurance and $p<X$ corresponds to selling a life insurance, that is, buying an annuity (Kronborg and Steffensen \cite{steffensen}).

From Definition 2.1 and the condition ${\bf (A_1)}$, we see that the wealth process (\ref{wealth}) is well defined and has a unique solution given by
\begin{eqnarray}\nonumber
  X(t) &=& x_0e^{\int_0^t(r(s)+\mu(s))ds}+\int_0^te^{\int_s^t(r(u)+\mu(u))du}\Bigl[ \pi(s)(\alpha(s,Z(s))-r(s)) +\ell(s)-c(s) \\ \nonumber
&& -\mu(s)p(s)\Bigl]ds + \int_0^t\pi(s)\beta(s,Z(s)) e^{\int_s^t(r(u)+\mu(u))du}dW_1(s) \\
   && + \int_0^t\pi(s)\sigma(s,Z(s)) e^{\int_s^t(r(u)+\mu(u))du}dW_2(s) + \int_0^t\pi(s)\gamma(s,Z(s)) e^{\int_s^t(r(u)+\mu(u))du}dN(s)\,.
\end{eqnarray}

We define a new probability measure $\mathbb{Q}$ equivalent to $\mathbb{P}$ in which  $S$ is a local martingale. The Radon-Nikodym derivative is given by:
\begin{eqnarray}\nonumber
    \Lambda(t)&:=&\exp\Bigl\{\int_0^t[(1-\psi(s))\lambda(s) -\frac{1}{2}\theta^2(s,Z(s),\psi(s))-\frac{1}{2}\nu^2(s,\psi(s))]ds  \\ \label{radom}
   && +\int_0^t\nu(s,Z(s),\psi(s))dW_1(s)+\int_0^t\theta(s,Z(s),\psi(s))dW_2(s)+\int_0^t\ln(\psi(s))dN(s)\Bigl\}.
\end{eqnarray}
By Girsanov's Theorem under $\mathbb{Q}$, we have that:

\begin{equation*}
    \left\{
       \begin{array}{lll}
         dW_1^{\mathbb{Q},\psi}(t)=dW_1(t)-\nu(t,Z(t),\psi(t))dt\,, \\
         dW_2^{\mathbb{Q},\psi}(t)=dW_2(t)-\theta(t,Z(t),\psi(t))dt\,, \\
         d\tilde{N}^{\mathbb{Q}}(t)=dN(t)-\psi(t)\lambda(t)dt\,
       \end{array}
     \right.
\end{equation*}
are Brownian motions and compensated Poisson random measure respectively, where (See Runggaldier \cite{runggaldier})
\begin{equation}\label{theta}
\nu(t,Z(s),\psi(t))= \frac{\beta(t,Z(t))}{\beta^2(t,Z(t))+\sigma^2(t,Z(t))}(r(t)-\alpha(t,Z(t))-\gamma(t,Z(t))\psi(t)\lambda(t))\,,
\end{equation}
\begin{equation}\label{psi}
  \theta(t,Z(t),\psi(t))= \frac{\sigma(t,Z(t))}{\beta^2(t,Z(t))+\sigma^2(t,Z(t))}(r(t)-\alpha(t,Z(t))-\gamma(t,Z(t))\psi(t)\lambda(t))\,,
\end{equation}
for any $\mathcal{F}_t$-adapted bounded measure $\psi>0$. We assume that $\beta^2(t,Z(t))+\sigma^2(t,Z(t))\neq0$. Thus we have infinitely many martingale measures and consequently incomplete market.

Note that, from the boundedness of the associated parameters, the predictable processes $\nu, \theta$, are bounded. Then, one can prove  that the stochastic exponential \eqref{radom} is a positive martingale (see Delong \cite{Delong}, Proposition 2.5.1.).\\

\noindent From (\ref{theta}) and (\ref{psi}), we have that:
\begin{eqnarray*}
    && [\pi(t)(\alpha(t,Z(t))-r(t))+\pi(t)\beta(t,Z(t))\nu(t,Z(t),\psi(t)) +\pi(t)\sigma(t,Z(t))\theta(t,Z(t),\psi(t)) \\
    && \ \  +\pi(t)\gamma(t,Z(t))\psi(t)\lambda(t)]\, =\, 0\,,
\end{eqnarray*}
then under $\mathbb{Q}$, the dynamics of the wealth process is given by
\begin{eqnarray}\nonumber
dX(t) &=& \left[(r(t)+\mu(t))X(t)+\ell(t)-c(t)-\mu(t)p(t)\right]dt+ \pi(t)\beta(t,Z(t)) dW_1^{\mathbb{Q},\psi}(t) \\ \label{wealthqq}
 && + \pi(t)\sigma(t,Z(t)) dW_2^{\mathbb{Q},\psi}(t) +\pi(t)\gamma(t,Z(t)) d\tilde{N}^{\mathbb{Q},\psi}(t)\,,
\end{eqnarray}
which gives the following representation:
\begin{eqnarray}\label{wealthq}
  X(t) &=& x_0e^{\int_0^t(r(s)+\mu(s))ds}+\int_0^te^{\int_s^t(r(u)+\mu(u))du}\Bigl[\ell(s)-c(s) -\mu(s)p(s)\Bigl]ds \\ \nonumber
 &&
 + \int_0^t\pi(s)\beta(s,Z(s)) e^{\int_s^t(r(u)+\mu(u))du}dW_1^{\mathbb{Q},\psi}(s) \\ \nonumber
 && + \int_0^t\pi(s)\sigma(s,Z(s)) e^{\int_s^t(r(u)+\mu(u))du}dW_2^{\mathbb{Q},\psi}(s) \\ \nonumber
 &&+ \int_0^t\pi(s)\gamma(s,Z(s)) e^{\int_s^t(r(u)+\mu(u))du}d\tilde{N}^{\mathbb{Q},\psi}(s)\,.
\end{eqnarray}

The following definition introduces the concept of admissible strategy.
\begin{definition}
Define the set of admissible strategies $\{\mathcal{A}\}$ as the consumption, investment and life insurance strategies for which the corresponding wealth process given by (\ref{wealthq}) is well defined and
\begin{equation}\label{admissible}
    X(t)+g(t)\geq0, \ \ \forall t\in[0,T],
\end{equation}
where $g$ is the time $t$ actuarial value of future labor income defined by
\begin{equation}\label{laborincome}
    g(t):=\mathbb{E}\left[\int_t^Te^{-\int_t^s(r(u)+\mu(u))du}\ell(s)ds\mid\mathcal{F}_t\right]\,.
\end{equation}
\end{definition}
Since
\begin{eqnarray} \label{martingalew1}
  \mathbb{E}^{\mathbb{Q},\psi}\left[\int_0^t\pi(s)\beta(s,Z(s)) e^{\int_s^t(r(u)+\mu(u))du}dW_1^{\mathbb{Q},\psi}(s)\right] &=& 0\,, \\ \label{martingalew2}
   \mathbb{E}^{\mathbb{Q},\psi}\left[\int_0^t\pi(s)\sigma(s,Z(s)) e^{\int_s^t(r(u)+\mu(u))du}dW_2^{\mathbb{Q},\psi}(s)\right] &=& 0\,, \\ \label{martingaleM}
  \mathbb{E}^{\mathbb{Q},\psi}\left[\int_0^t\pi(s)\gamma(s,Z(s)) e^{\int_s^t(r(u)+\mu(u))du}d\tilde{N}^{\mathbb{Q},\psi}(s)\right] &=& 0\,,
\end{eqnarray}
we see that the last three terms in (\ref{wealthq}) are $\mathbb{Q}$ local martingales and from (\ref{admissible}), a supermartingale (see e.g. Karatzas {\it et al.} \cite{karatzas}). Then, the strategy $(c,\pi,p)$ is admissible if and only if $X(T)\geq0$ and $\forall t\in[0,T]$,

\begin{equation}\label{xexpectation}
    X(t)+g(t)=\mathbb{E}^{\mathbb{Q},\psi}\Bigl[\int_t^Te^{-\int_t^s(r(u)+\mu(u))du}[c(s)+\mu(s)p(s)]ds +e^{-\int_t^T(r(u)+\mu(u))du}X(T)\mid \mathcal{F}_t\Bigl]\,.
\end{equation}
At time zero, this means that the strategies have to fulfill the following budget constraint:
\begin{equation}\label{budget}
    X(0)+g(0)=\mathbb{E}^{\mathbb{Q},\psi}\left[\int_0^Te^{-\int_0^t(r(u)+\mu(u))du}[c(t)+\mu(t)p(t)]dt +e^{-\int_0^T(r(u)+\mu(u))du}X(T)\right].
\end{equation}

Note that the condition (\ref{admissible}) allows the wealth to become negative, as long as it does not exceed in absolute value the actuarial value of future labor income $g(t)$ in \eqref{laborincome} so that it prevent the family from borrowing against the future labor income.\\

As in Kronborg and Steffensen \cite{steffensen}, the following remark is useful for the rest of the paper.\\

\noindent{\bf Remark.}
Define
\begin{equation}\label{martingalez}
    Y(t):=\int_0^te^{-\int_0^s(r(u)+\mu(u))du}[c(s)+\mu(s)p(s)-\ell(s)]ds+X(t)e^{-\int_0^t(r(u)+\mu(u))du}, \ \ t\in[0,T].
\end{equation}
By (\ref{wealthq}) we have that the Conditions (\ref{martingalew1}), \eqref{martingalew2} and (\ref{martingaleM}) are fulfilled if and only if $Y$ is a martingale under $\mathbb{Q}$. The natural interpretation is that, under $\mathbb{Q}$, the discounted wealth plus discounted pension contributions should be martingales. It is useful to note that if $Y$ is a martingale under $\mathbb{Q}$, the dynamics of $X$ can be represented in the following form:
\begin{eqnarray}\label{admissibility}
    dX(t) &=& [(r(t)+\mu(t))X(t)+\ell(t)-c(t)-\mu(t)p(t)]dt+\phi_1(t)dW_2^{\mathbb{Q},\psi}(t) \\ \nonumber
    &&  +\phi_2(t)dW_2^{\mathbb{Q},\psi}(t)+\varphi(t)d\tilde{N}^{\mathbb{Q},\psi}(t)\,,
\end{eqnarray}
for some $\mathcal{F}^W_t$-adapted processes $\phi_1\,, \phi_2(t)$ and $\mathcal{F}^N_t$-adapted process $\varphi$, satisfying $\phi(t),\varphi(t)\in L^2$, for any $ t\in[0,T]$, then under $\mathbb{Q}$, $Y$ is a martingale.

\section{The Unrestricted problem}

In this section, we solve our main optimization problem using the martingale duality method. Consider the concave, non-decreasing, upper semi-continuous and differentiable w.r.t. the second variable utility functions
$
U_k:[0,T]\times\mathbb{R}_+\rightarrow\mathbb{R}_+\,, \ \ k=1,2,3\,.
$
Define the strictly decreasing continuous inverse functions $ I_k:[0,T]\times\mathbb{R}_+\rightarrow\mathbb{R}_+\,, \ \ k=1,2,3\,,$ by
\begin{equation}\label{inverse}
I_k(t,x)=\left(\frac{\partial U_k(t,x)}{\partial x}\right)^{-1}\,, \ \ k=1,2,3.
\end{equation}
The Legendre-Frechel transform $\tilde{U}_k$ corresponding to the utility function $U_k$ is defined as follows: (see, Karatzas {\it et al} \cite{karatzas98})
\begin{equation}\label{legendre}
    \widetilde{U}_k(t,x):= \max_{y>0}[U_k(t,y)-yx]=U_k(t,I_k(t,x))-xI_k(t,x)\,, \ \ t\in[0,T]\,, \ 0<x<\infty\,.
\end{equation}

Let $\rho(t)$ be a deterministic function representing the policyholder's time preferences. The policyholder chooses his strategy $(c(t),\pi(t),p(t))$ in order to optimize the expected utility from consumption, legacy upon death and terminal pension. His strategy, therefore, fulfils the following:

\begin{eqnarray}\nonumber
    &&
J(x,c^*,\pi^*,p^*):=\sup_{(\pi,c,p)\in\mathcal{A}'}\mathbb{E}\Bigl[\int_0^{\tau\wedge T}
e^{-\int_0^s\rho(u)du}U_1(s,c(s))ds +e^{-\int_0^{\tau}\rho(u)du}U_2(\tau,p(\tau))\mathbf{1}_{\{\tau\leq T\}}  \\ \label{maximumutility} &&
 \ \ \ \ \ \ \  \ \ \ \ \ \ \ \ \ \ \ \ \ \ \ \ \ \ \ \ \ \ \ \ \ \ \   +e^{-\int_0^T\rho(u)du}U_3(X(T))\mathbf{1}_{\{\tau>T\}}\Bigl].
\end{eqnarray}
Here, $\mathbf{1}_A$ is an indicator function of a set $A$.
$\mathcal{A}'$ is the subset of the admissible strategies (feasible strategies) given by:
\begin{eqnarray}\nonumber
    \mathcal{A}' &:=& \Bigl\{(c,\pi,p)\in\mathcal{A}\mid \mathbb{E}\Bigl[\int_0^{\tau\wedge T}
e^{-\int_0^s\rho(u)du}\min(0,U_1(s,c(s)))ds \\ \nonumber
&& +e^{-\int_0^{\tau}\rho(u)du}\min(0,U_2(\tau,p(\tau)))\mathbf{1}_{\{\tau\leq T\}} \\ \label{feasibleset}
&& +e^{-\int_0^T\rho(u)du}\min(0,U_3(X(T)))\mathbf{1}_{\{\tau>T\}}\Bigl]>-\infty\Bigl\}.
\end{eqnarray}

The feasible strategy (\ref{feasibleset}) means that it is allowed to draw an infinite utility from the strategy $(\pi,c,p)\in\mathcal{A}'$, but only if the expectation over the negative parts of the utility function is finite. It is clear that for a positive utility function, the sets $\mathcal{A}$ and $\mathcal{A}'$ are equal ( see e.g., Kronborg and Steffensen \cite{steffensen}).
In order to solve the unrestricted control problem (\ref{maximumutility}), one can use the Hamilton-Jacobi-Bellman (HJB) equation (e.g. Mnif \cite{mnif}) or the combination of HJB equation with backward stochastic differential equation (BSDE) with jumps (Guambe and Kufakunesu \cite{guambe}). In this paper, we use the martingale approach applied in (Karatzas {\it et al.} \cite{karatzas}, Castaneda-Leyva and Hern\'andez-Hern\'andez \cite{hernandez}, Kronborg and Steffensen \cite{steffensen}). This is due to the restricted problem in the next section, where its terms are derived from the martingale method in the unrestricted problem.

Using (\ref{survival}) and (\ref{death}), we can rewrite the policyholder's optimization problem (\ref{maximumutility}) as:

\begin{eqnarray}\nonumber
 J(x,c^*,\pi^*,p^*) &=& \sup_{(c,\pi,p)\in\mathcal{A}'}\mathbb{E}\Bigl[\int_0^T
e^{-\int_0^s\rho(u)du}[\bar{F}(s)U_1(s,c(s))+f(s)U_2(s,p(s))]ds \\ \nonumber
   && \ \ \ \ \ \ \ \ \ \ \ \ \ \ \ \ \ \ \  +e^{-\int_0^T\rho(u)du}\bar{F}(T)U_3(X(T))\Bigl].
\end{eqnarray}
Hence,

\begin{eqnarray}\nonumber
 J(x,c^*,\pi^*,p^*) &=&  \sup_{(c,\pi,p)\in\mathcal{A}'}\mathbb{E}\Bigl[\int_0^T
e^{-\int_0^s(\rho(u)+\mu(u))du}[U_1(s,c(s)) +\mu(s)U_2(s,p(s))]ds \\ \label{optimization}
   && \ \ \ \ \ \ \ \ \ \ \ \ \ \ \ \ \ \ \ \  +e^{-\int_0^T(\rho(u)+\mu(u))du}U_3(X(T))\Bigl].
\end{eqnarray}

We now solve the main problem using the duality method. This approach allow us construct an auxiliary  market $\mathcal{M}_{\hat{\psi}}$ related to the original one, by searching over a family of martingale measures, the {\it inf-sup} martingale measure $\hat{\psi}$ and so the hedging portfolio process in the auxiliary market, satisfies the portfolio constraints in the original market $\mathcal{M}_{\psi}$ and replicates exactly the contingent claim almost surely. This approach has been applied under diffusion in a number of papers, see, for instance, He and Pearson \cite{he1991}, Karatzas and Shreve \cite{karatzas98}, Section 5.8, Casta\~neda-Leyva and  Hernandez-Hernandez \cite{hernandez}, Liang and Guo \cite{liang}.
Otherwise, one can complete the market by adding factitious risky assets in order to obtain a complete market, then apply the martingale approach to solve the optimal portfolio problem. For the market completion, we refer to Karatzas {\it et. al.} \cite{karatzas}, Runggaldier \cite{runggaldier}, Section 4., Corcuera {\it et. al.} \cite{corcuera2006}.

Thus, we define the associated dual functional $\Psi(\zeta,\psi)$ to the primal problem \eqref{optimization}, where $\zeta$ is the Lagrangian multiplier, by:

\begin{eqnarray*}
  \Psi(\hat{\zeta},\hat{\psi}) &:=& \sup_{\zeta>0;\psi>0}\Bigl\{\mathbb{E}\Bigl[\int_0^T
e^{-\int_0^s(\rho(u)+\mu(u))du}[U_1(s,c(s)) +\mu(s)U_2(s,p(s))]ds  \\
   && \ \ \ \ \ \ \  +e^{-\int_0^T(\rho(u)+\mu(u))du}U_3(X(T))\Bigl] +\zeta(x+g(0)) \\
   && -\zeta\left\{ \mathbb{E}^{\mathbb{Q},\psi}\left[\int_0^Te^{-\int_0^t(r(u)+\mu(u))du}[c(t)+\mu(t)p(t)]dt +e^{-\int_0^T(r(u)+\mu(u))du}X(T)\right]\right\}\Bigl\}.
\end{eqnarray*}
The dual problem that corresponds to the primal problem \eqref{optimization}, consists of
\begin{equation}\label{dualproblem}
\min_{\zeta>0,\psi>0}\Psi(\zeta,\psi).
\end{equation}
Note that (see Cuoco \cite{cuoco} or Karatzas {\it et al} \cite{karatzas98}, for more details)
\begin{eqnarray*}
&&\mathbb{E}^{\mathbb{Q},\psi}\left[\int_0^Te^{-\int_0^t(r(u)+\mu(u))du}[c(t)+\mu(t)p(t)]dt +e^{-\int_0^T(r(u)+\mu(u))du}X(T)\right] \\
&=& \mathbb{E}\left[\int_0^Te^{-\int_0^t(r(u)+\mu(u))du}\Gamma^{\psi}(t)[c(t)+\mu(t)p(t)]dt +e^{-\int_0^T(r(u)+\mu(u))du}\Gamma^{\psi}(T)X(T)\right],
\end{eqnarray*}
where we have defined the adjusted state price deflator $\Gamma$ by:
 \begin{eqnarray}\nonumber
   \Gamma^{\psi}(t) &:=& \Lambda(t)e^{\int_0^t(\rho(s)-r(s))ds} \\ \label{Gamma}
    &=& \exp\Bigl\{\int_0^t[\rho(s)-r(s)-\frac{1}{2}\theta^2(s,Z(s),\psi(s))-\frac{1}{2}\nu^2(s,Z(s),\psi(s))+(1-\psi(s))\lambda(s)]ds \\ \nonumber
&& +\int_0^t\nu(s,Z(s),\psi(s))dW_1(s)+\int_0^t\theta(s,Z(s),\psi(s))dW_2(s)+\int_0^t\ln(\psi(s))dN(s)\Bigl\}\,.
 \end{eqnarray}
This deflator can be written in the SDE form by:
\begin{eqnarray}\label{stateprice}
    d\Gamma^{\psi}(t)&=&\Gamma^{\psi}(t)\Bigl[(\rho(t)-r(t))dt+\nu(t,Z(s),\psi(t))dW_1(t)  \\ \nonumber
    && \ \ \ \ \ \ \ \ \ \ \ +\theta(t,R(t),\psi(t))dW_2(t)   +(\psi(t)-1)d\tilde{N}(t)\Bigl].
\end{eqnarray}

 Then, from the definition of the Legendre-Transform \eqref{legendre}, the dual functional $\Psi$ in \eqref{dualproblem} can be written as

\begin{eqnarray}\nonumber
  \Psi(\zeta,\psi) &:=& \mathbb{E}\Bigl[\int_0^T
e^{-\int_0^s(\rho(u)+\mu(u))du}[\widetilde{U}_1(s,c(s)) +\mu(s)\widetilde{U}_2(s,p(s))]ds   \\ \label{dualfunctional}
   && \ \ \ \ \ \ \ \ \ \ \ \ \   +e^{-\int_0^T(\rho(u)+\mu(u))du}\widetilde{U}_3(X(T))\Bigl] +\zeta(x+g(0))\,.
\end{eqnarray}

The following theorem shows, under suitable conditions the relationship between the primal problem \eqref{optimization} and the dual problem \eqref{dualproblem}.

\begin{theorem}
Suppose that $\hat{\psi}>0$ and $\hat{\zeta}>0$. The strategy $(c^*(t),p^*(t))\in\mathcal{A}'$ and $X^*(T)>0$ defined by
$$
c^*(t)=I_1(t,\hat{\zeta}\Gamma^{\hat{\psi}}(t)); \ \ p^*(t)=I_2(t,\hat{\zeta}\Gamma^{\hat{\psi}}(t)); \ \ X^*(T)=I_3(\hat{\zeta}\Gamma^{\hat{\psi}}(T)),
$$
such that \eqref{budget} is fulfilled, where $X^*(T)\in\mathcal{F}_T$ is measurable, is the optimal solution to the primal problem \eqref{optimization}, while $(\hat{\psi},\hat{\zeta})$ is the optimal solution to the dual problem \eqref{dualproblem}.
\end{theorem}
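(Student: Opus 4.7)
The plan is a standard convex-duality / martingale argument, whose engine is the Fenchel--Young inequality attached to the Legendre transform \eqref{legendre}:
$$U_k(t,y)\le\widetilde U_k(t,x)+xy,\qquad x,y>0,$$
which holds for $k=1,2,3$ with equality if and only if $y=I_k(t,x)$. Evaluated at $x=\hat\zeta\Gamma^{\hat\psi}(\cdot)$ and $y=c^{*},p^{*},X^{*}(T)$ (which is exactly the definition of the candidates), this inequality becomes a pointwise equality, while for any competing admissible strategy it remains a strict inequality in expectation; this gives weak duality together with an attainment certificate in a single stroke.

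First I would fix an arbitrary admissible triple $(c,\pi,p)\in\mathcal{A}'$ with terminal wealth $X(T)\ge 0$, together with arbitrary $\zeta>0,\ \psi>0$. Applying Fenchel--Young pointwise to $U_1(s,c(s))$, $\mu(s)U_2(s,p(s))$ and $U_3(X(T))$ with $x=\zeta\Gamma^{\psi}$, multiplying by the discount factors $e^{-\int_0^{\cdot}(\rho+\mu)du}$ and taking expectation, the linear ``price'' part of the resulting upper bound is a $\zeta$-multiple of
$$\mathbb{E}\!\left[\int_0^T e^{-\int_0^s(\rho+\mu)du}\Gamma^{\psi}(s)[c(s)+\mu(s)p(s)]ds+e^{-\int_0^T(\rho+\mu)du}\Gamma^{\psi}(T)X(T)\right].$$
The definition \eqref{Gamma} of $\Gamma^{\psi}$, combined with the change-of-measure identity already noted in the excerpt, rewrites this as a $\mathbb{Q}^{\psi}$-expectation, and the admissibility budget constraint \eqref{budget} then collapses it to $x+g(0)$. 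This produces weak duality $J(c,\pi,p)\le\Psi(\zeta,\psi)$ for every admissible strategy and every dual pair.

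Second, I would evaluate both sides at $(\hat\zeta,\hat\psi)$ together with the candidate controls $c^{*},p^{*},X^{*}(T)$. The equality clause of Fenchel--Young makes the three utility terms attain their dual bounds simultaneously, so $J(c^{*},\pi^{*},p^{*})=\Psi(\hat\zeta,\hat\psi)$ as soon as this triple is admissible. Combined with weak duality this forces $\Psi(\hat\zeta,\hat\psi)\le\Psi(\zeta,\psi)$ for every $(\zeta,\psi)$ and $J(c,\pi,p)\le J(c^{*},\pi^{*},p^{*})$ for every admissible $(c,\pi,p)$, delivering both optimality claims at once.

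The main obstacle, which the clause ``such that \eqref{budget} is fulfilled'' in the statement is designed to absorb, is the construction of the hedging investment $\pi^{*}$ and the verification of admissibility. The plan is to define $X^{*}(t)$ by plugging $(c^{*},p^{*},X^{*}(T))$ into the conditional-expectation representation \eqref{xexpectation} under $\mathbb{Q}^{\hat\psi}$, so that the process $Y^{*}$ of \eqref{martingalez} becomes a genuine $\mathbb{Q}^{\hat\psi}$-martingale; a predictable representation theorem with respect to $W_1^{\mathbb{Q},\hat\psi},\,W_2^{\mathbb{Q},\hat\psi},\,\widetilde N^{\mathbb{Q},\hat\psi}$ then produces integrands $(\phi_1,\phi_2,\varphi)$ that are to be matched with $(\pi^{*}\beta,\pi^{*}\sigma,\pi^{*}\gamma)$, from which one reads off the scalar process $\pi^{*}$. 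The compatibility of these three equations in the single unknown $\pi^{*}$ is precisely what pins down $\hat\psi$ through \eqref{theta}--\eqref{psi}, which is the structural role of the infimum over $\psi$ in the dual problem. Non-negativity of $X^{*}(t)+g(t)$ then follows from $c^{*},p^{*}\ge 0$ and $X^{*}(T)\ge 0$ inside \eqref{xexpectation}, and $(c^{*},p^{*})\in\mathcal{A}'$ from the integrability of the Legendre transforms $\widetilde U_k$ evaluated at $\hat\zeta\Gamma^{\hat\psi}$.
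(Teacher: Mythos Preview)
Your proposal is correct and follows essentially the same approach as the paper's proof: weak duality via the Fenchel--Young inequality for the Legendre transforms $\widetilde U_k$, then strong duality by observing that equality is attained at the candidates $c^*=I_1(\hat\zeta\Gamma^{\hat\psi})$, $p^*=I_2(\hat\zeta\Gamma^{\hat\psi})$, $X^*(T)=I_3(\hat\zeta\Gamma^{\hat\psi})$ satisfying the budget constraint. Your third paragraph, on producing the hedging process $\pi^*$ through a martingale representation of $Y^*$ under $\mathbb{Q}^{\hat\psi}$ and matching integrands, goes beyond what the paper actually proves in Theorem~3.1 (the paper defers the explicit computation of $\pi^*$ to the power-utility subsection and never discusses the compatibility issue you raise), so on that point you are being more careful than the original.
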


\begin{proof}
By the concavity of the utility functions $U_k, \ \ k=1,2,3$, (see Karatzas {\it et al} \cite{karatzas98}), we know that
$$
U_k(t,x)\leq U(t,I_k(t,x))-y(I_k(t,y)-x)\,.
$$
Then it can be easily shown that
\begin{equation}\label{inequality}
J(t,c(t),p(t),X(t))\leq\inf_{\zeta>0,\psi>0}\Psi(\zeta,\psi)\,.
\end{equation}
Hence, to finish the proof, we need to show that
$$
\inf_{\zeta>0,\psi>0}\Psi(\zeta,\psi)\geq J(t,c(t),p(t),X(t))\,.
$$
From \eqref{dualfunctional}, we know that
\begin{eqnarray*}
  && \inf_{\zeta>0,\psi>0}\Psi(\zeta,\psi)  \\
  &=& \inf_{\zeta>0,\psi>0}\Bigl\{\mathbb{E}\Bigl[\int_0^T
e^{-\int_0^s(\rho(u)+\mu(u))du}[\widetilde{U}_1(s,c(s)) +\mu(s)\widetilde{U}_2(s,p(s))]ds   \\
   && \ \ \ \ \ \ \ \ \ \ \ \ \   +e^{-\int_0^T(\rho(u)+\mu(u))du}\widetilde{U}_3(X(T))\Bigl] +\zeta(x+g(0))\Bigl\}  \\
   &\leq& \mathbb{E}\Bigl[\int_0^T
e^{-\int_0^s(\rho(u)+\mu(u))du}[\widetilde{U}_1(s,c(s)) +\mu(s)\widetilde{U}_2(s,p(s))]ds +e^{-\int_0^T(\rho(u)+\mu(u))du}\widetilde{U}_3(X(T))\Bigl] \\
&& +\hat{\zeta}(x+g(0)) \\
&=& \mathbb{E}\Bigl[\int_0^T
e^{-\int_0^s(\rho(u)+\mu(u))du}[U_1(s,c^*(s)) +\mu(s)U_2(s,p^*(s))]ds  \\
   && +e^{-\int_0^T(\rho(u)+\mu(u))du}U_3(X^*(T))\Bigl] -\hat{\zeta}\Bigl\{ \mathbb{E}\Bigl[\int_0^Te^{-\int_0^t(r(u)+\mu(u))du}\Gamma^{\hat{\psi}}(t)[c^*(t)+\mu(t)p^*(t)]dt  \\
   &&  +e^{-\int_0^T(r(u)+\mu(u))du}\Gamma^{\hat{\psi}}(T)X^*(T)\Bigl]\Bigl\} +\hat{\zeta}(x+g(0)) \\
   &=& \mathbb{E}\Bigl[\int_0^Te^{-\int_0^s(\rho(u)+\mu(u))du}[U_1(s,c^*(s)) +\mu(s)U_2(s,p^*(s))]ds \\
   && \ \ \ \ \ \ \ \ \ \ \ \ \ \ \ \ \ +e^{-\int_0^T(\rho(u)+\mu(u))du}U_3(X^*(T))\Bigl] \\
   &\leq& \sup_{(c,p,\pi)\in\mathcal{A}'}\Bigl\{\mathbb{E}\Bigl[\int_0^Te^{-\int_0^s(\rho(u)+\mu(u))du}[U_1(s,c(s))
    +\mu(s)U_2(s,p(s))]ds \\
   && \ \ \ \ \ \ \ \ \ \ \ \ \ \ \ \ \ \    +e^{-\int_0^T(\rho(u)+\mu(u))du}U_3(X(T))\Bigl]\Bigl\} \\
   &=& J(t,c(t),p(t),X(t))\,.
\end{eqnarray*}
Then, using \eqref{inequality}, we conclude the proof, i.e., $(c^*(t),p^*(t),X^*(T))$ is the optimal solution to the primal problem \eqref{optimization} and $(\hat{\psi},\hat{\zeta})$ is the optimal solution to the dual problem \eqref{dualproblem}.
\end{proof}

\noindent {\bf Remark.} Note that the optimal $(\hat{\psi},\hat{\zeta})$ is not necessarily unique, thus for  different choice of initial wealth, one might obtain different $\hat{\psi}$ and $\hat{\zeta}$.\\

\subsection{Results on the power utility case}
~~\\

In this section, we intend to derive the explicit solutions for the utility functions of the constant relative risk acersion type given by:

\begin{equation}\label{powerutility}
    U_1(t,x)=U_2(t,x)=U_3(t,x)=\left\{
           \begin{array}{ll}
             \frac{e^{-\kappa t}}{\delta}x^{\delta}, & \hbox{if} \ \ x>0, \\
             \lim_{x\rightarrow0}\frac{e^{-\kappa t}}{\delta}x^{\delta}, & \hbox{if} \ \  x=0, \\
             -\infty, & \hbox{if} \ \  x<0,
           \end{array}
         \right.
\end{equation}
for some $\delta\in(-\infty,1)\setminus\{0\}$ and $t\in[0,T]$. Thus the inverse function \eqref{inverse} is given by
\begin{equation}\label{inversepower}
I_k(t,x)=e^{-\frac{\kappa}{1-\delta}t}x^{-\frac{1}{1-\delta}}\,, \ \ \ k=1,2,3
\end{equation}
and the corresponding Legendre-Transform $\widetilde{U}_k$ by

$$
\widetilde{U}_k(t,x)=U_k(t,I_k(t,x))-xI_k(t,x)=\frac{1-\delta}{\delta}e^{-\frac{\kappa}{1-\delta}t}x^{-\frac{\delta}{1-\delta}}\,, \ \ k=1,2,3\,.
$$

We define a function $\mathcal{H}(\psi)$ by

\begin{eqnarray}\label{functionh}
\mathcal{H}(\psi) &:=& \mathbb{E}\Bigl[\int_0^Te^{-\int_0^t(\rho(u)+\mu(u)+\frac{\kappa}{1-\delta}u)du} [\Gamma^{\psi}(t)]^{-\frac{\delta}{1-\delta}}[1+\mu(t)]dt \\ \nonumber
&& \ \ \ \ \ \ \ \ \ \ \  +e^{-\int_0^T(\rho(u)+\mu(u)+\frac{\kappa}{1-\delta}u)du}[\Gamma^{\psi}(T)]^{-\frac{\delta}{1-\delta}}\Bigl]\,.
\end{eqnarray}
Then the dual functional \eqref{dualfunctional} is given by
\begin{equation}\label{dualp}
\Psi(\zeta,\psi)=\frac{1-\delta}{\delta}\zeta^{-\frac{\delta}{1-\delta}}\mathcal{H}(\psi)+\zeta(x+g(0))\,.
\end{equation}
Fixing $\psi>0$ and taking the minimum on \eqref{dualp}, we obtain the optimal $\hat{\zeta}$, given by

$$
\hat{\zeta}=\left[\frac{x+g(0)}{\mathcal{H}(\psi)}\right]^{\delta-1}\,.
$$
Inserting this optimal $\hat{\zeta}$ to the above equation, we obtain

$$
\Psi(\hat{\zeta},\psi)=\frac{1}{\delta}(x+g(0))^\delta \mathcal{H}^{1-\delta}(\psi)\,.
$$
Now, solving the dual problem \eqref{dualproblem} is equivalent to solving the following value function problem

\begin{equation}\label{newproblem}
V(t,Z(t))=\inf_{\psi>0}\mathcal{H}(\psi)\,, \ \ \ \delta>0
\end{equation}
or
\begin{equation}\label{newproblemsup}
V(t,Z(t))=\sup_{\psi>0}\mathcal{H}(\psi)\,, \ \ \ \delta<0\,.
\end{equation}

Note that from \eqref{stateprice} and the It\^o's formula (see, Oksendal and Sulem \cite{Oksendal}, Theorem 1.16), yields
\begin{eqnarray*}
  &&d[\Gamma^{\psi}(t)]^{-\frac{\delta}{1-\delta}} \\
  &=& [\Gamma^{\psi}(t)]^{-\frac{\delta}{1-\delta}}\Bigl\{\Bigl[-\frac{\delta}{1-\delta}(\rho(t)-r(t)) +\frac{\delta}{2(1-\delta)^2}(\nu^2(t,Z(t),\psi(t))+\theta^2(t,Z(t),\psi(t))) \\
&& +\bigl(\psi^{-\frac{\delta}{1-\delta}}(t)-1 +\frac{\delta}{1-\delta}(\psi(t)-1)\bigl)\lambda(t) \Bigl]dt
  -\frac{\delta}{1-\delta}\theta(t,Z(t),\psi(t))dW_2(t) \\
 && -\frac{\delta}{1-\delta}\nu(t,Z(t),\psi(t))dW_1(t) +\left( \psi^{-\frac{\delta}{1-\delta}}(t)-1\right)d\tilde{N}(t)\Bigl\}\,,
\end{eqnarray*}
which gives the following representation
\begin{eqnarray*}
  &&\mathbb{E}\left\{[\Gamma^{\psi}(t)]^{-\frac{\delta}{1-\delta}}\right\} \\
  &=& \mathbb{E}\Bigl[\exp\Bigl\{\int_0^t\Bigl[\frac{\delta}{1-\delta}(r(u)-\rho(u))+\frac{\delta}{2(1-\delta)^2}(\nu^2(u,Z(u),\psi(u)) +\theta^2(u,Z(u),\psi(u))) \\
   && \ \ \ \ \ \  +\bigl(\psi^{-\frac{\delta}{1-\delta}}(u)-1  +\frac{\delta}{1-\delta}(\psi(u)-1)\bigl)\lambda(u) \Bigl]du\Bigl\}\Bigl]\,; \ \ \ u\in[0,T]\,.
\end{eqnarray*}
Then the function $\mathcal{H}(\psi)$ can be written as
\begin{equation}\label{solutionh}
\mathcal{H}(\psi) = \mathbb{E}\Bigl[\int_0^Te^{-\int_0^t(\tilde{r}(u,Z(u),\psi(u))+\mu(u)+\frac{\kappa}{1-\delta}u)du} [1+\mu(t)]dt +e^{-\int_0^T(\tilde{r}(u,Z(u),\psi(u))+\mu(u)+\frac{\kappa}{1-\delta}u)du}\Bigl]\,,
\end{equation}
where
\begin{eqnarray}\nonumber
    \tilde{r}(t,Z(t),\psi(t))&=&-\frac{\delta}{1-\delta}r(t)+\frac{1}{1-\delta}\rho+\frac{\delta}{2(1-\delta)^2}(\nu^2(t,Z(t),\psi(t)) +\theta^2(t,Z(t),\psi(t))) \\ \label{rtilde}
    && +\left(\psi^{-\frac{\delta}{1-\delta}}(t)-1+\frac{\delta}{1-\delta}(\psi(t)-1)\right)\lambda(t)\,.
\end{eqnarray}

Proceeding as in \eqref{radom}, we define a new probability measure $\widetilde{\mathbb{Q}}$ equivalent to $\mathbb{P}$, by
$$
d\widetilde{\mathbb{Q}}=\Lambda^{-\frac{\delta}{1-\delta}}d\mathbb{P}\,.
$$
By this change of measure, the external economic factor \eqref{nontradable} can be written as

\begin{equation}\label{newnontradable}
    dZ(t)=\left[\eta(Z(t))-\frac{\delta}{1-\delta}\nu(t,Z(t),\psi(t))\right]dt +dW^{\widetilde{\mathbb{Q}},\psi}_1(t)\,.
\end{equation}
Now, the problem \eqref{newproblem} with $\mathcal{H}(\psi)$ given by \eqref{solutionh} can be solved using the dynamic programming approach. It is easy to see that the associated Hamilton-Jacobi-Bellman (HJB) equation satisfying $V(t,Z(t))$ is given by (see, Oksendal and Sulem \cite{Oksendal}, Theorem 3.1. for more details)

\begin{eqnarray*}
  1+\mu(t)+V_t(t,z)+\frac{1}{2}V_{zz}(t,z)+\Bigl[\frac{1}{1-\delta}(\delta r(t)-\rho-\delta\lambda(t)+\kappa t)-\lambda(t)+\mu(t)  && \\
  -\frac{\delta(r(t)-\alpha(t,z))^2}{2(1-\delta)^2[\beta^2(t,z)+\sigma^2(t,z)]}\Bigl]V(t,z)
  +\left[\eta(z)-\frac{\delta\beta(t,z)(r(t)-\alpha(t,z))}{(1-\delta)[\beta^2(t,z)+\sigma^2(t,z)]}\right]V_z(t,z) &&  \\
  -\inf_{\psi>0}\Bigl\{\left(\psi^{-\frac{\delta}{1-\delta}}+\frac{\delta}{1-\delta}\psi\right)\lambda(t) +\frac{\delta(\gamma^2(t,z)\lambda^2(t)\psi^2-2(r(t)-\alpha(t,z))\gamma(t,z)\lambda(t)\psi)}{2(1-\delta)^2 [\beta^2(t,z)+\sigma^2(t,z)]}V(t,z) && \\
   -\frac{\delta\beta(t,z)\gamma(t,z)\lambda(t)\psi}{(1-\delta)[\beta^2(t,z)+\sigma^2(t,z)]}V_z(t,z)\Bigl\}\,\,=\,\,0\,. &&
\end{eqnarray*}

We look for a candidate solution of the form $V(t,z)=\exp\{-h(t,z)\}$. Then we obtain the following semi-linear partial differential equation, whose existence and uniqueness of a smooth solution has been established, under the assumptions ${\bf (A_1)-(A_2)}$. (For more details we refer to Pham \cite{pham} or Mnif \cite{mnif}, Theorem 4.1.)

\begin{equation}\label{pde}
    -h_t(t,z)-\frac{1}{2}h_{zz}(t,z)+\frac{1}{2}h_z^2(t,z)+H(t,z,h,h_z,\psi)=0\,,
\end{equation}
where

\begin{eqnarray*}
  H(t,z,h,h_z,\psi) &=& \frac{1}{1-\delta}(\delta r(t)-\rho-\delta\lambda(t)+\kappa t)-\lambda(t)+\mu(t) \\
  && -\frac{\delta(r(t)-\alpha(t,z))^2}{2(1-\delta)^2[\beta^2(t,z)+\sigma^2(t,z)]} +(1+\mu(t))e^{h(t,z)} \\
   && +\Bigl[\eta(z)-\frac{\delta\beta(t,z)(r(t)-\alpha(t,z))}{(1-\delta)[\beta^2(t,z)+\sigma^2(t,z)]}\Bigl]h_z(t,z)  \\
  && -\inf_{\psi>0}\Bigl\{\Bigl(\psi^{-\frac{\delta}{1-\delta}}+\frac{\delta}{1-\delta}\psi\Bigl)\lambda(t)e^{h(t,z)} \\
  && +\frac{\delta(\gamma^2(t,z)\lambda^2(t)\psi^2-2(r(t)-\alpha(t,z))\gamma(t,z)\lambda(t)\psi)}{2(1-\delta)^2 [\beta^2(t,z)+\sigma^2(t,z)]} \\
  && -\frac{\delta\beta(t,z)\gamma(t,z)\lambda(t)\psi}{(1-\delta)[\beta^2(t,z)+\sigma^2(t,z)]}h_z(t,z)\Bigl\}\,.
\end{eqnarray*}

Now, suppose that there exists a unique solution $h(t,z)\in C^{1,2}([0,T)\times\mathbb{R})\times C^0([0,T]\times\mathbb{R})$ to the semi-linear equation \eqref{pde}. We define a function

\begin{eqnarray*}
  \mathcal{K}(t,z,\psi) &=& \Bigl(\psi^{-\frac{\delta}{1-\delta}}+\frac{\delta}{1-\delta}\psi\Bigl)\lambda(t)e^{h(t,z)} -\frac{\delta\beta(t,z)\gamma(t,z)\lambda(t)\psi}{(1-\delta)[\beta^2(t,z)+\sigma^2(t,z)]}h_z(t,z) \\
  &&  +\frac{\delta(\gamma^2(t,z)\lambda^2(t)\psi^2-2(r(t)-\alpha(t,z))\gamma(t,z)\lambda(t)\psi)}{2(1-\delta)^2 [\beta^2(t,z)+\sigma^2(t,z)]}\,.
\end{eqnarray*}
Note that from ${\bf (A_1)}$, $\mathcal{K}$ is continuous. Moreover, it is convex in a bounded $\psi>0$, i.e.,
$$
\frac{\partial^2\mathcal{K}}{\partial\psi^2}(t,z,\psi)=\frac{\delta\lambda(t)}{(1-\delta)^2}\Bigl[\psi^{\frac{\delta-2}{1-\delta}}e^{h(t,z)} +\frac{1}{\beta^2(t,z)+\sigma^2(t,z)}\Bigl]>0\,, \ \ \ \delta>0\,.
$$
So there exists
$$\hat{\psi}(t)\in\arg\min\mathcal{K}(t,z,\psi)\,.$$

The existence of the optimal measure $\hat{\psi}(t)$ for the problem \eqref{newproblemsup} can be solved similarly.\\

Since we obtained the optimal $\hat{\zeta}$ and $\hat{\psi}$, from Theorem 3.1 and \eqref{inversepower}, we obtain the following expressions

\begin{eqnarray}\label{optimalcd}
  c^*(t)=p^*(t) &=& \frac{X(t)+g(t)}{\mathcal{H}(t)}e^{-\frac{\kappa}{1-\delta}t}\,, \\ \label{optimalx}
  X^*(T) &=& \frac{X(t)+g(t)}{\mathcal{H}(t)}e^{-\frac{\kappa}{1-\delta}t}\left(\frac{\Gamma(T)}{\Gamma(t)}\right)^{-\frac{1}{1-\delta}}\,,
\end{eqnarray}
where

$$
\mathcal{H}(t)=\mathbb{E}\Bigl[\int_t^Te^{-\int_t^s(\tilde{r}(u,Z(u),\hat{\psi})+\mu(u)+\frac{\kappa}{1-\delta}(u-t)du} [1+\mu(s)]ds +e^{-\int_t^T(\tilde{r}(u,Z(u),\hat{\psi})+\mu(u)+\frac{\kappa}{1-\delta}(u-t))du}\Bigl]\,.
$$

From (\ref{stateprice}), by It\^o's formula we know that
\begin{eqnarray*}
  && \left(\frac{\Gamma(T)}{\Gamma(t)}\right)^{-\frac{1}{1-\delta}} \\
   &=& \exp\Bigl\{\frac{1}{1-\delta}\int_t^T\Bigl[ r(s)+\frac{1}{2}\nu^2(s,Z(s),\hat{\psi})+\frac{1}{2}\theta^2(s,Z(s),\hat{\psi})-\rho(s)+[\hat{\psi}-1-\ln\hat{\psi}]\lambda(s)\bigl]ds \\
   && -\frac{1}{1-\delta}\left[\int_t^T\nu(s,Z(s),\hat{\psi})dW_1(s) +\int_t^T\theta(s,Z(s),\hat{\psi})dW_2(s)+\int_t^T\ln\hat{\psi}d\tilde{N}(s)\right] \Bigl\}\,.
\end{eqnarray*}
Then we have:
\begin{eqnarray}\nonumber
    dX^*(t) &=& \mathcal{O}dt-\frac{1}{1-\delta}\nu(t)(X^*(t)+g(t))dW_1(t)-\frac{1}{1-\delta}\theta(t)(X^*(t)+g(t))dW_2(t) \\ \label{optimalwealth}
   && +\left(\hat{\psi}^{-\frac{1}{1-\delta}}(t) -1\right)(X^*(t)+g(t))dN(t),
\end{eqnarray}
where $\mathcal{O}:=\mathcal{O}(t,X^*(t),g(t))$. Comparing (\ref{optimalwealth}) with (\ref{wealth}), we obtain the optimal allocation:

\begin{equation}\label{optimalpi}
    \left\{
      \begin{array}{lll}
        \pi^*(t)\beta(t,Z(t)) =-\frac{1}{1-\delta}\nu(t,Z(t),\hat{\psi})(X^*(t)+g(t))\,,  \\
        \pi^*(t)\sigma(t,Z(t))=-\frac{1}{1-\delta}\theta(t,Z(t),\hat{\psi})(X^*(t)+g(t))\,,  \\
        \pi^*(t)\gamma(t,Z(t))=\left(\hat{\psi}^{-\frac{1}{1-\delta}}-1\right)(X^*(t)+g(t))\,.
      \end{array}
\right.
\end{equation}\\
Hence,
\begin{equation}\label{explicitpi}
    \pi^*(t)=\frac{\left(\hat{\psi}^{-\frac{1}{1-\delta}}-1\right) -\frac{1}{1-\delta}\nu(t,Z(t),\hat{\psi}) -\frac{1}{1-\delta}\theta(t,Z(t),\hat{\psi})}{\beta(t,Z(t))+\sigma(t,Z(t))+\gamma(t,Z(t))}(X^*(t)+g(t))\,.
\end{equation}

Inserting \eqref{optimalcd} and \eqref{optimalpi} into \eqref{wealthqq} we obtain the following geometric SDE which can be solved applying the It\^o formula (Oksendal and Sulem \cite{Oksendal}, Theorem 1.16)
\begin{eqnarray}\nonumber
  \frac{d(X^*(t)+g(t))}{X^*(t)+g(t)} &=& \Bigl[r(t)+\mu(t)-\frac{1+\mu(t)}{\mathcal{H}(t)} \Bigl]dt -\frac{1}{1-\delta}\nu(t,Z(t),\hat{\psi})dW_1^{\mathbb{Q},\hat{\psi}}(t) \\ \label{dynamicoptimalxg}
   &&  -\frac{1}{1-\delta}\theta(t,Z(t),\hat{\psi})dW_2^{\mathbb{Q},\hat{\psi}}(t) +\left(\hat{\psi}^{-\frac{1}{1-\delta}}(t)-1\right)d\widetilde{N}^{\mathbb{Q},\hat{\psi}}(t)\,.
\end{eqnarray}

We conclude this section, summarizing our results in the following Lemma:
\begin{lemma}
For the power utility functions \eqref{powerutility}, the optimal investment-consumption-insurance strategy $(c^*(t),\pi^*(t),p^*(t)), \ \forall t\in[0,T]$ is given by
$$
c^*(t)=p^*(t) = \frac{X^*(t)+g(t)}{\mathcal{H}(t)}e^{-\frac{\kappa}{1-\delta}t}
$$
and
\begin{equation*}
    \pi^*(t)=\frac{\left(\hat{\psi}^{-\frac{1}{1-\delta}}-1\right) -\frac{1}{1-\delta}\nu(t,Z(t),\hat{\psi}) -\frac{1}{1-\delta}\theta(t,Z(t),\hat{\psi})}{\beta(t,Z(t))+\sigma(t,Z(t))+\gamma(t,Z(t))}(X^*(t)+g(t))\,.
\end{equation*}

\end{lemma}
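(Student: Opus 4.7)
The plan is to assemble the Lemma directly from the Legendre-transform representation in Theorem 3.1 combined with the explicit Legendre inverse for CRRA utility. The inverse marginal utility for the power case is $I_k(t,x)=e^{-\frac{\kappa}{1-\delta}t}x^{-\frac{1}{1-\delta}}$, so Theorem 3.1 yields, for any admissible pair of dual variables $(\hat\zeta,\hat\psi)$, that $c^*(t)=p^*(t)=e^{-\frac{\kappa}{1-\delta}t}(\hat\zeta\Gamma^{\hat\psi}(t))^{-\frac{1}{1-\delta}}$ and $X^*(T)=e^{-\frac{\kappa}{1-\delta}T}(\hat\zeta\Gamma^{\hat\psi}(T))^{-\frac{1}{1-\delta}}$.

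Next, I would substitute these expressions into the budget-constraint version of the admissibility identity \eqref{xexpectation}. Using the state price deflator representation $\mathbb{E}^{\mathbb{Q},\hat\psi}[\cdot]=\mathbb{E}[\Gamma^{\hat\psi}(\cdot)/\Gamma^{\hat\psi}(t)\,\cdot\,]$, the right-hand side of \eqref{xexpectation} collapses (after pulling out the $\hat\zeta^{-\frac{1}{1-\delta}}$ factor) to $\hat\zeta^{-\frac{1}{1-\delta}}\mathcal{H}(t)$, where $\mathcal{H}(t)$ is the conditional analogue of the quantity defined in \eqref{functionh}. Solving for $\hat\zeta^{-\frac{1}{1-\delta}}$ gives $\hat\zeta^{-\frac{1}{1-\delta}}=(X^*(t)+g(t))/\mathcal{H}(t)$, and substituting back produces exactly the announced formula $c^*(t)=p^*(t)=(X^*(t)+g(t))e^{-\frac{\kappa}{1-\delta}t}/\mathcal{H}(t)$. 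The optimality of $\hat\psi$ (existence of a minimiser/maximiser) has already been reduced to the convex minimisation of $\mathcal{K}(t,z,\psi)$ arising from the HJB equation for $V(t,Z(t))$, so no new work is needed there.

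For the allocation $\pi^*$, I would differentiate the explicit expression $X^*(t)+g(t)=\hat\zeta^{-\frac{1}{1-\delta}}e^{-\frac{\kappa}{1-\delta}t}(\Gamma^{\hat\psi}(t))^{-\frac{1}{1-\delta}}\mathcal{H}(t)$ by applying It\^o's formula to $(\Gamma^{\hat\psi})^{-\frac{1}{1-\delta}}$ via the SDE \eqref{stateprice}. This produces the martingale part displayed in \eqref{optimalwealth}, namely Brownian coefficients $-\frac{1}{1-\delta}\nu\,(X^*+g)$ and $-\frac{1}{1-\delta}\theta\,(X^*+g)$ and a jump coefficient $(\hat\psi^{-\frac{1}{1-\delta}}-1)(X^*+g)$. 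Matching these coefficients against the diffusion and jump terms of the controlled wealth dynamics \eqref{wealth} yields the three equations in \eqref{optimalpi}; adding them and dividing by $\beta+\sigma+\gamma$ gives the closed-form \eqref{explicitpi}. Finally, substituting $(c^*,p^*,\pi^*)$ into \eqref{wealthqq} and applying It\^o's formula verifies that $X^*(t)+g(t)$ satisfies the geometric SDE \eqref{dynamicoptimalxg}, confirming that the candidate is indeed an admissible wealth process.

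The main obstacle is the coefficient-matching step, since one must verify that the drift term $\mathcal{O}(t,X^*(t),g(t))$ arising from the differentiation of the explicit formula agrees with the drift $(r(t)+\mu(t))X^*(t)+\ell(t)-c^*(t)-\mu(t)p^*(t)$ from \eqref{wealth} once $c^*=p^*$ is inserted, i.e.\ that the labour-income identity $dg(t)=[(r(t)+\mu(t))g(t)-\ell(t)]dt+(\text{martingale})$ (which follows from \eqref{laborincome}) is compatible with the chosen $\hat\zeta$ and $\hat\psi$. This is mostly a book-keeping verification, but care is required because three orthogonal noise sources (two correlated Brownian motions and the compensated Poisson process) must be matched simultaneously, and the denominator $\beta+\sigma+\gamma$ in \eqref{explicitpi} must be nonzero for the formula to be well defined; under assumption $(\mathbf{A_1})$ and $\gamma>-1$ this non-degeneracy holds generically.
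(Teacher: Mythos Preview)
Your proposal is correct and follows essentially the same route as the paper: apply Theorem~3.1 with the CRRA inverse $I_k(t,x)=e^{-\frac{\kappa}{1-\delta}t}x^{-\frac{1}{1-\delta}}$, use the budget/admissibility identity to identify $\hat\zeta^{-\frac{1}{1-\delta}}$ in terms of $(X^*(t)+g(t))/\mathcal{H}(t)$, then apply It\^o's formula to $(\Gamma^{\hat\psi})^{-\frac{1}{1-\delta}}$ and match the diffusion and jump coefficients against \eqref{wealth} to obtain the three equations \eqref{optimalpi}, whose sum yields \eqref{explicitpi}. The only difference is that you spell out the drift-matching and non-degeneracy checks more explicitly than the paper, which simply writes the drift as $\mathcal{O}$ and passes directly to the coefficient comparison.
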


\section{The restricted control problem}
In this section, we solve the optimal investment, consumption and life insurance problem for the constrained control problem. We obtain an optimal strategy for the case of continuous constraints (American put options) by using a so-called {\it option based portfolio insurance (OBPI) strategy}. The OBPI method consists in taking a certain part of capital and invest in the optimal portfolio of the unconstrained problem and the remaining part insures the position with American put. We prove the admissibility and the optimality of the strategy. For more details see e.g., El Karoui {\it et. al.} \cite{elkaroui}, Kronborg and Steffensen \cite{steffensen}.

Consider the following problem
\begin{eqnarray}\nonumber
  &&  \sup_{(c,\pi,D)\in\mathcal{A}'}\mathbb{E}\Bigl[\int_0^T
e^{-\int_0^s(\rho(u)+\mu(u))du}[U(c(s)) +\mu(s)U(p(s))]ds \\ \label{restricted}
   && +e^{-\int_0^T(\rho(u)+\mu(u))du}U(X(T))\Bigl]\,,
\end{eqnarray}
under the capital guarantee restriction
\begin{equation}\label{capitalrestriction}
    X(t)\geq k(t,D(t)), \ \forall t\in[0,T],
\end{equation}
where
\begin{equation*}
    D(t):=\int_0^th(s,X(s))ds,
\end{equation*}
where $k$ and $h$ are deterministic functions of time. The guarantee \eqref{capitalrestriction} is covered by
\begin{equation}\label{guarantee0}
    k(t,d)=0
\end{equation}
and
\begin{equation}\label{guaranteek}
    k(t,d)=x_0e^{\int_0^t(r^{(g)}(s)+\mu(s))ds}+de^{\int_0^t(r^{(g)}(s)+\mu(s))ds},
\end{equation}
with
\begin{equation*}
    h(s,x)=e^{-\int_0^s(r^{(g)}(u)+\mu(u))du}[\ell(s)-c(s,x)-\mu(s)p(s,x)],
\end{equation*}
where $r^{(g)}\leq r$ is the minimum rate of return guarantee excess of the objective mortality $\mu$. Then
\begin{equation}\label{guaranteekk}
    k(t,z)=x_0e^{\int_0^t(r^{(g)}(s)-\mu(s))ds}+\int_0^te^{\int_s^t(r^{(g)}(u)+\mu(u))ds}[\ell(s)-c(s)-\mu(s)p(s)]ds.
\end{equation}

We still denote by $X^*,c^*,\pi^*$ and $p^*$ the optimal wealth, optimal consumption, investment and life insurance for the unrestricted problem (\ref{maximumutility}), respectively. The optimal wealth for the unrestricted problem $Y^*(t):=X^*(t)+g(t)$ has the dynamics
\begin{eqnarray}\label{wealtha}
  dY^*(t) &=&  Y^*(t)\Bigl\{\left[r(t)+\mu(t)-\frac{1+\mu(t)}{\mathcal{H}(t)}\right]dt -\frac{1}{1-\delta}\nu(t,Z(t),\hat{\psi})dW_1^{\mathbb{Q},\hat{\psi}}(t) \\ \nonumber
&& -\frac{1}{1-\delta}\theta(t,Z(t),\hat{\psi})dW_2^{\mathbb{Q},\hat{\psi}}(t) +\left(\hat{\psi}^{-\frac{1}{1-\delta}}(t)-1\right)d\widetilde{N}^{\mathbb{Q},\hat{\psi}}(t)\Bigl\}\,,
\end{eqnarray}
$\forall t\in[0,T], \ \ Y^*(0)=y_0,$ where $y_0:=x_0+g(0)$. Under $\mathbb{Q}$, the economic factor $Z$ is given by

\begin{equation}\label{newnontradableq}
    dZ(t)=\left[\eta(Z(t))+\nu(t,Z(t),\hat{\psi}(t))\right]dt +dW^{\mathbb{Q},\hat{\psi}}_1(t)\,.
\end{equation}

 Let $P_{y,d}^a(t,T,k+g)$ denote the time-$t$ value of an American put option with strike price $k(s,D(s))+g(s)$, $\forall s\in[t,T]$, where $D(t)=d$ and maturity $T$ written on a portfolio $Y$, where $Y(s), \ \ s\in[t,T]$ is the solution to (\ref{wealtha}), with $Y(t)=y$. By definition, the price of such put option is given by

\begin{eqnarray*}
  P_{y,d}^a(t,T,k+g) &:=& \sup_{\tau_s\in\mathcal{T}_{t,T}}\mathbb{E}^{\mathbb{Q}}\Bigl[e^{-\int_t^{\tau_s}(r(u)+\mu(u))du}[k(\tau_s,D(\tau_s)) +g(\tau_s) \\
   && -Y(\tau_s)]^+ \Bigl|Y(t)=y,D(t)=d\Bigl],
\end{eqnarray*}
where $\mathcal{T}_{t,T}$ is the set of stopping times $\tau_s\in[t,T]$.

As in Kronborg and Steffensen \cite{steffensen}, we introduce the American put option-based portfolio insurance
\begin{equation}\label{obpi}
    \hat{X}^{(\varrho)}(t):=\varrho(t,D(t))Y^*(t)+P_{\varrho Y^*\,,\,D}^a(t,T,k+g)-g(t), \ \ t\in[0,T],
\end{equation}
for $\varrho\in(0,1)$ defined by
\begin{equation}\label{varrho}
    \varrho(t,D(t))=\varrho_0\vee\sup_{s\leq t}\left(\frac{b(s,D(s))}{Y^*(s)}\right),
\end{equation}
where $b(t,D(t))$ is the exercise boundary of the American put option given by
\begin{equation}\label{boundary}
    b(t,d):=\sup\left\{y:P_{y,d}^a(t,T,k+g)=(k(t,d)+g(t)-y)^+\right\}\,
\end{equation}
and $\varrho_0:=\varrho(0,D(0))$ is determined by the budget constraint
\begin{equation}\label{obpi0}
    \varrho(0,D(0))Y^*(0)+P_{\varrho Y^*\,,\,D}^a(0,T,k+g)-g(0)=x_0.
\end{equation}

By definition of an American put option,  $P_{\varrho Y^*\,,D\,}^a(t,T,k+g)\geq(k(t,d)+g(t)-\varrho(t,D(t)) Y^*(t))^+$, $\forall t\in[0,T]$. Hence

\begin{eqnarray*}
  \widehat{X}^{(\varrho)}(t) &:=& \varrho(t,D(t))Y^*(t)+P_{\varrho Y^*,D}^a(t,T,k+g)-g(t) \\
   &\geq& \varrho(t,D(t))Y^*(t)+(k(t,d)+g(t)-\varrho(t,D(t)) Y^*(t))^+-g(t) \\
   &\geq& k(t,d), \forall t\in[0,T],
\end{eqnarray*}
i.e., $\widehat{X}^{(\varrho)}$ fulfils the American capital guarantee.

Under the optimal martingale measure $\hat{\psi}$, we recall some basic properties of American put options in a Black-Scholes market (Karatzas and Shreve, \cite{karatzas98})

\begin{equation*}
      \begin{array}{ll}
        P_{y,d}^a(t,T,k+g)=k(t,d)+g(t)-y, & \forall (t,y,d)\in\mathcal{C}^c \\
        \frac{\partial}{\partial y}P_{y,d}^a(t,T,k+g)=-1, & \forall (t,y,d)\in\mathcal{C}^c \\
        AP_{y,d}^a(t,T,k+g)=(r(t)+\mu(t))P_{y,d}^a(t,T,k+g), & \forall (t,y,d)\in\mathcal{C},
      \end{array}
\end{equation*}
where from (\ref{wealtha}), the generator operator $A$ is given by (see e.g. Oksendal and Sulem \cite{Oksendal}, Li {\it et. al} \cite{li})
\begin{eqnarray*}
  (A\phi)(y,z) &=& \frac{\partial \phi}{\partial t}+ \left(r(t)+\mu(t)- \frac{1+\mu(t)}{\mathcal{H}(t)}\right)y\frac{\partial\phi}{\partial y} +\left(\eta(z)+\nu(t,z,\hat{\psi})\right)\frac{\partial\phi}{\partial z} +\frac{1}{2}\frac{\partial^2\phi}{\partial z^2} \\
 && +\frac{1}{2(1-\delta)^2}\left[\nu^2(t,Z(t),\hat{\psi}) + \theta^2(t,Z(t),\hat{\psi})\right]y^2\frac{\partial^2\phi}{\partial y^2}  -\frac{1}{1-\delta}\nu(t,z,\hat{\psi})\frac{\partial^2\phi}{\partial y\partial z} \\
   && +\left[\phi(t,y\hat{\psi}^{-\frac{1}{1-\delta}},z)-\phi(t,y,z) -y\left(\hat{\psi}^{-\frac{1}{1-\delta}}-1\right)\frac{\partial \phi}{\partial y}\right]\lambda(t)
\end{eqnarray*}
and
\begin{equation*}
    \mathcal{C}:=\{(t,y,d):P_{y,d}^a(t,T,k+g)>(k(t,d)+g(t)-y)^+\}
\end{equation*}
defines the continuation region. $\mathcal{C}^c$ is the stopping region, that is the complementary of the continuation region $\mathcal{C}$. From the exercise boundary given in (\ref{boundary}), we can write the continuation region by
\begin{equation*}
    \mathcal{C}=\{(t,y,d):y>b(t,d)\}.
\end{equation*}
Define a function $H$ by
\begin{equation*}
    H(t,y,d):=y+P_{y,d}^a(t,T,k+g)-g(t),
\end{equation*}
then we have
$$
\widehat{X}^{(\varrho)}(t)=H(t,\varrho(t,D(t)) Y^*(t),D(t)).
$$
From the properties of $P_{y,d}^a(t,T,k+g)$, we deduce that
\begin{eqnarray}\nonumber
  H(t,y,d) &=& k(t,d), \ \ \forall (t,y,d)\in\mathcal{C}^c, \\ \label{partialh}
  \frac{\partial}{\partial y}H(t,y,d) &=& 0, \ \ \forall (t,y,d)\in\mathcal{C}^c \\ \label{partialk}
 AH(t,y,d) &=& \frac{\partial}{\partial t}k(t,d)+h(t,d)\frac{\partial}{\partial d}k(t,d)  \ \ \forall (t,y,d)\in\mathcal{C}^c, \\ \nonumber
AH(t,y,d) &=& (r(t)+\mu(t))P_{y,d}^a(t,T,k+g)+\ell(t)-(r(t)+\mu(t))g(t) \\ \nonumber
&&+\left(r(t)+\mu(t)-\frac{1+\mu(t)}{\mathcal{H}(t)}\right)y \\ \nonumber
&&+\left(P_{y\hat{\psi}^{-\frac{1}{1-\delta}},d}^a(t,T,k+g)-P_{y,d}^a(t,T,k+g)\right)\lambda(t) \\ \label{operatorA}
&=& (r(t)+\mu(t))H(t,y,d)+\ell(t)-\frac{1+\mu(t)}{\mathcal{H}(t)}y+\Bigl[H(t,y\hat{\psi}^{-\frac{1}{1-\delta}},d) \\ \nonumber
&& -H(t,y,d)-y\left(\hat{\psi}^{-\frac{1}{1-\delta}}(t)-1\right)\Bigl]\lambda(t), \ \ \forall (t,y,d)\in\mathcal{C}.
\end{eqnarray}
\begin{proposition}
Consider the strategy $(\varrho c^*,\varrho\pi^*,\varrho p^*)$, where $\varrho$ is defined by (\ref{varrho}). Then, the strategy $(\varrho c^*,\varrho\pi^*,\varrho p^*)$, where $\varrho$  is admissible.

\end{proposition}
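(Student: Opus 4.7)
The approach is to identify the process $\widehat{X}^{(\varrho)}$ defined in \eqref{obpi} with the wealth produced by the triple $(\varrho c^{*},\varrho\pi^{*},\varrho p^{*})$, and then read off admissibility from the OBPI floor on $\widehat{X}^{(\varrho)}$. The integrability requirements of Definition~2.1 transfer at once from $(c^{*},\pi^{*},p^{*})$ to $(\varrho c^{*},\varrho\pi^{*},\varrho p^{*})$ because by its very definition \eqref{varrho} the reflector $\varrho(t,D(t))$ is bounded in $[\varrho_{0},1]$; the only genuine point to establish is the state constraint $X(t)+g(t)\geq 0$.

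Set $\zeta(t):=\varrho(t,D(t))Y^{*}(t)$, so that $\widehat{X}^{(\varrho)}(t)=H(t,\zeta(t),D(t))$. Applying the It\^o formula with jumps (Oksendal and Sulem \cite{Oksendal}) to $H(t,\zeta,D)$, with $dY^{*}$ taken from \eqref{wealtha} and $dD=h(t,\cdot)\,dt$, I would obtain
\begin{equation*}
d\widehat{X}^{(\varrho)}(t)=(AH)(t,\zeta(t),D(t))\,dt+\frac{\partial H}{\partial y}(t,\zeta(t),D(t))\,Y^{*}(t)\,d\varrho(t)+dM(t),
\end{equation*}
where $M$ gathers the $W_{1}^{\mathbb{Q},\hat{\psi}}$, $W_{2}^{\mathbb{Q},\hat{\psi}}$ and $\widetilde{N}^{\mathbb{Q},\hat{\psi}}$ stochastic integrals arising from expanding $\frac{\partial H}{\partial y}\,d\zeta$ and $\frac{\partial H}{\partial z}\,dZ$. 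The finite-variation term $\frac{\partial H}{\partial y}Y^{*}\,d\varrho$ vanishes by smooth fit: the support of $d\varrho$ sits on the exercise boundary $\{\zeta=b(t,D(t))\}$, where \eqref{partialh} combined with the standard identity $\partial_{y}P^{a}=-1$ gives $\partial_{y}H=0$. The same cancellation absorbs any upward jump of $\varrho$ induced by a downward jump of $Y^{*}$ through $N$.

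The drift $(AH)$ is then decomposed using \eqref{partialk} on the stopping region $\mathcal{C}^{c}$ and \eqref{operatorA} on the continuation region $\mathcal{C}$. Together with the identity $(1+\mu(t))\zeta(t)/\mathcal{H}(t)=\varrho(t,D(t))(c^{*}(t)+\mu(t)p^{*}(t))$ read off from \eqref{optimalcd}, this drift rearranges to
\begin{equation*}
[(r(t)+\mu(t))\widehat{X}^{(\varrho)}(t)+\ell(t)-\varrho(t,D(t))c^{*}(t)-\mu(t)\varrho(t,D(t))p^{*}(t)]\,dt,
\end{equation*}
while the $W_{1}^{\mathbb{Q},\hat{\psi}}$, $W_{2}^{\mathbb{Q},\hat{\psi}}$ and $\widetilde{N}^{\mathbb{Q},\hat{\psi}}$ coefficients are precisely those of \eqref{wealtha} scaled by $\varrho$; matching them against the $\mathbb{Q}$-dynamics \eqref{wealthqq} and inverting as in \eqref{optimalpi} recovers the allocation $\varrho(t,D(t))\pi^{*}(t)$. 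Hence $\widehat{X}^{(\varrho)}$ coincides with the wealth process generated by $(\varrho c^{*},\varrho\pi^{*},\varrho p^{*})$ starting from $x_{0}$.

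To conclude, the elementary bound $P^{a}_{\varrho Y^{*},D}(t,T,k+g)\geq(k(t,D(t))+g(t)-\varrho(t,D(t))Y^{*}(t))^{+}$ established just before the proposition yields $\widehat{X}^{(\varrho)}(t)\geq k(t,D(t))\geq 0$ under either \eqref{guarantee0} or \eqref{guaranteek}, so $\widehat{X}^{(\varrho)}(t)+g(t)\geq 0$ for every $t\in[0,T]$. I expect the main obstacle to be the rigorous It\^o expansion of the second paragraph: one must argue that $\varrho$ is a genuine bounded-variation semimartingale whose measure $d\varrho$ charges only the free boundary, and that smooth fit for the American put $P^{a}$ persists in the present stochastic-volatility jump-diffusion framework, so that the reflection contribution truly vanishes and the candidate strategy is indeed admissible.
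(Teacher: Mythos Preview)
Your overall route matches the paper's: apply It\^o's formula to $H(t,\varrho Y^{*},D)$, kill the $d\varrho$ contribution via smooth fit \eqref{partialh} at the exercise boundary, identify the drift through \eqref{partialk}--\eqref{operatorA}, and conclude admissibility from the form \eqref{admissibility}. Two steps, however, do not go through as you have written them.

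First, the drift does not simply ``rearrange'' to $[(r+\mu)\widehat{X}^{(\varrho)}+\ell-\varrho c^{*}-\mu\varrho p^{*}]\,dt$. On the stopping region $\mathcal{C}^{c}$, \eqref{partialk} gives $AH=\partial_t k+h\,\partial_d k$, which has no reason to coincide with that expression. The paper's argument writes the drift separately on $\{\varrho Y^{*}>b\}$ and on $\{\varrho Y^{*}\le b\}$, and then observes that the latter set equals $\{\varrho=b/Y^{*}\}$ and hence has zero $dt\otimes d\mathbb{P}$-measure (the running supremum in \eqref{varrho} touches its driver only on a Lebesgue-null time set), so the $\mathcal{C}^{c}$ contribution may be discarded. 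This zero-measure step is essential and is missing from your sketch.

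Second, the $W_1^{\mathbb{Q},\hat{\psi}}$, $W_2^{\mathbb{Q},\hat{\psi}}$ and $\widetilde{N}^{\mathbb{Q},\hat{\psi}}$ coefficients are \emph{not} those of \eqref{wealtha} scaled by $\varrho$. It\^o's formula produces $-\tfrac{1}{1-\delta}\nu\,\varrho Y^{*}\,\partial_y H\,dW_1^{\mathbb{Q},\hat{\psi}}$, $-\tfrac{1}{1-\delta}\theta\,\varrho Y^{*}\,\partial_y H\,dW_2^{\mathbb{Q},\hat{\psi}}$, and $\bigl[H(t,\varrho Y^{*}\hat{\psi}^{-1/(1-\delta)},D)-H(t,\varrho Y^{*},D)\bigr]\,d\widetilde{N}^{\mathbb{Q},\hat{\psi}}$; the factor $\partial_y H=1+\partial_y P^{a}$ (and, for the jump, the full increment of $H$) cannot be removed. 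Consequently the allocation implicit in $\widehat{X}^{(\varrho)}$ is not $\varrho\pi^{*}$ alone but $\varrho\pi^{*}$ together with the replicating position in the American put, exactly as Theorem~4.2 states. The paper's proof therefore does not try to read off $\varrho\pi^{*}$ from these integrands; it simply checks that the resulting SDE has the structure \eqref{admissibility}, which is what the admissibility criterion in the Remark following \eqref{martingalez} asks for.
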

\begin{proof}
For $\varrho$ constant and linearity of $Y^*(t),\forall t\in[0,T]$, we have that $\varrho(t,D(t)) Y^*(t)$ and $Y^*(t)$ have the same dynamics. Then, using It\^o's formula, (\ref{partialk})-(\ref{operatorA}), $(c^*(t),p^*(t))$ in Theorem 3.1 and the fact that $\varrho$ increases only at the boundary, we obtain (here $\frac{\partial}{\partial y}$ means differentiating with respect to the second variable)
\begin{eqnarray*}
  &&dH(t,\varrho(t,D(t))Y^*(t), D(t)) \\
&=& [dH(t,\varrho Y^*(t),D(t))]+Y^*(t)\frac{\partial}{\partial y}H(t,\varrho(t,D(t))Y^*(t),D(t))d\varrho(t,D(t)) \\
   &=& AH(t,\varrho Y^*(t))dt-\frac{1}{1-\delta}\nu(t,Z(t),\hat{\psi})\varrho Y^*(t) \frac{\partial}{\partial y}H(t,\varrho Y^*(t),D(t))dW_1^{\mathbb{Q},\hat{\psi}}(t) \\
   && -\frac{1}{1-\delta}\theta(t,Z(t),\hat{\psi})\varrho Y^*(t) \frac{\partial}{\partial y}H(t,\varrho Y^*(t),D(t))dW_2^{\mathbb{Q},\hat{\psi}}(t) \\
   &&  +\left[H(t,\varrho Y^*(t)\hat{\psi}^{-\frac{1}{1-\delta}}(t),D(t))-H(t,\varrho Y^*(t),D(t))\right]d\tilde{N}^{\mathbb{Q},\hat{\psi}}(t) \\
&& + Y^*(t)\frac{\partial}{\partial y}H(t,\varrho(t,D(t))Y^*(t),D(t))d\varrho(t,D(t)) \\
&=& \Bigl\{(r(t)+\mu(t))H(t,\varrho Y^*(t),D(t))+\ell(t)-\varrho c^*(t)-\varrho\mu(t)p^*(t) \\
&&+ \Bigl[H(t,\varrho Y^*(t)\hat{\psi}^{-\frac{1}{1-\delta}}(t),D(t))-H(t,\varrho Y^*(t),D(t)) \\
&&-\varrho Y^*(t)\left(\hat{\psi}^{-\frac{1}{1-\delta}}(t)-1\right)\Bigl]\lambda(t)\Bigl\} \mathbf{1}_{(\varrho(t,D(t))Y^*(t)>b(t,D(t)))}dt \\
&& \left[\frac{\partial}{\partial t}k(t,D(t))+h(t,D(t))\frac{\partial}{\partial d}k(t,D(t))\right] \mathbf{1}_{(\varrho(t,D(t))Y^*(t)\leq b(t,D(t)))}dt \\
&& + Y^*(t)\frac{\partial}{\partial y}H(t,\varrho(t,D(t))Y^*(t),D(t))\mathbf{1}_{(\varrho(t,D(t))Y^*(t)=b(t,D(t)))} d\varrho(t,D(t)) \\
&& -\frac{1}{1-\delta}\nu(t,Z(t),\hat{\psi})\varrho Y^*(t) \frac{\partial}{\partial y}H(t,\varrho Y^*(t),D(t))dW_1^{\mathbb{Q},\hat{\psi}}(t) \\
&& -\frac{1}{1-\delta}\theta(t,Z(t),\hat{\psi})\varrho Y^*(t) \frac{\partial}{\partial y}H(t,\varrho Y^*(t),D(t))dW_2^{\mathbb{Q},\hat{\psi}}(t) \\
&&+\left[H(t,\varrho Y^*(t)\hat{\psi}^{-\frac{1}{1-\delta}}(t),D(t))-H(t,\varrho Y^*(t),D(t))\right]d\tilde{N}^{\mathbb{Q},\hat{\psi}}(t).
\end{eqnarray*}
From (\ref{partialh}) we know that $\frac{\partial}{\partial y}H(t,\varrho(t,D(t))Y^*(t),D(t))=0$ on the set \\$\{(t,\omega):\varrho(t,D(t))Y^*(t) =b(t,D(t))\}$, then
\begin{eqnarray*}
   && dH(t,\varrho(t,D(t))Y^*(t), D(t)) \\
   &=& \Bigl\{(r(t)+\mu(t))H(t,\varrho Y^*(t),D(t))+\ell(t)-\varrho c^*(t)-\varrho\mu(t)p^*(t) \\
   &&  + \Bigl[H(t,\varrho Y^*(t)\hat{\psi}^{-\frac{1}{1-\delta}}(t),D(t))-H(t,\varrho Y^*(t),D(t))-\varrho Y^*(t)\left(\hat{\psi}^{-\frac{1}{1-\delta}}(t)-1\right)\Bigl]\lambda(t)\Bigl\}dt \\
&& + \Bigl[\frac{\partial}{\partial t}k(t,D(t))+h(t,D(t))\frac{\partial}{\partial d}k(t,D(t))-[(r(t)+\mu(t))k(t,D(t))+\ell(t) \\
&&-\varrho(t,D(t)) c^*(t)-\varrho(t,D(t))\mu(t)p^*(t)]\Bigl]\mathbf{1}_{(\varrho(t,D(t))Y^*(t)\leq b(t,D(t)))}dt \\
&& -\frac{1}{1-\delta}\nu(t,Z(t),\hat{\psi})\varrho Y^*(t) \frac{\partial}{\partial y}H(t,\varrho Y^*(t),D(t))dW_1^{\mathbb{Q},\hat{\psi}}(t) \\
&& -\frac{1}{1-\delta}\theta(t,Z(t),\hat{\psi})\varrho Y^*(t) \frac{\partial}{\partial y}H(t,\varrho Y^*(t),D(t))dW_2^{\mathbb{Q},\hat{\psi}}(t) \\
&&+\left[H(t,\varrho Y^*(t)\hat{\psi}^{-\frac{1}{1-\delta}}(t),D(t))-H(t,\varrho Y^*(t),D(t))\right]d\tilde{N}^{\mathbb{Q},\hat{\psi}}(t).
\end{eqnarray*}
Hence, since $\{(t,\omega):\varrho(t,D(t))Y^*(t) \leq b(t,D(t))\}=\left\{(t,\omega):\varrho(t,D(t))=\frac{b(t,D(t))}{Y^*(t)}\right\}$ has a zero $dt\otimes d\mathbb{P}$-measure, we conclude that
\begin{eqnarray*}
   && dH(t,\varrho(t,D(t))Y^*(t), D(t)) \\
   &=& \Bigl\{(r(t)+\mu(t))H(t,\varrho Y^*(t),D(t))+\ell(t)-\varrho c^*(t)-\varrho\mu(t)p^*(t) \\
   &&  + \Bigl[H(t,\varrho Y^*(t)\psi^{-\frac{1}{1-\delta}}(t),D(t))-H(t,\varrho Y^*(t),D(t))-\varrho Y^*(t)\left(\psi^{-\frac{1}{1-\delta}}(t)-1\right)\Bigl]\lambda(t)\Bigl\}dt \\
&& -\frac{1}{1-\delta}\nu(t,Z(t),\hat{\psi})\varrho Y^*(t) \frac{\partial}{\partial y}H(t,\varrho Y^*(t),D(t))dW_1^{\mathbb{Q},\hat{\psi}}(t) \\
&& -\frac{1}{1-\delta}\theta(t,Z(t),\hat{\psi})\varrho Y^*(t) \frac{\partial}{\partial y}H(t,\varrho Y^*(t),D(t))dW_2^{\mathbb{Q},\hat{\psi}}(t) \\
&&+\left[H(t,\varrho Y^*(t)\hat{\psi}^{-\frac{1}{1-\delta}}(t),D(t))-H(t,\varrho Y^*(t),D(t))\right]d\tilde{N}^{\mathbb{Q},\hat{\psi}}(t),
\end{eqnarray*}
i.e. by (\ref{admissibility}), the strategy $(\varrho c^*, \varrho\pi^*,\varrho p^*)$ is admissible.
\end{proof}

We then state the main result of this section.
\begin{theorem}
Consider the strategy $(\widehat{c},\widehat{\pi},\widehat{p})$, $\forall t\in[0,T]$ given by
\begin{eqnarray}
  \widehat{c} &=& \frac{\varrho(t,D(t))Y^*(t)}{\mathcal{H}(t)}= \varrho(t,D(t))c^*(t), \\
  \widehat{\pi} &=& \varrho(t,D(t))\pi^*(t), \\
  \widehat{p} &=& \frac{\varrho(t,D(t))Y^*(t)}{\mathcal{H}(t)}=\varrho(t,D(t))p^*(t),
\end{eqnarray}
where the strategy $(c^*,\pi^*,p^*)$ is defined in Lemma 3.2. Combined with a position in an American put option written on the portfolio $(\varrho(s,D(s))Y^*(s))$ with strike price $k(s,D(s))+g(s)$, $\forall s\in[t,T]$ and maturity $T$, where $\varrho(s,D(s)),s\in[t,T]$ is a function defined by (\ref{varrho}). Then, the strategy is optimal for the American capital guarantee control problem given by (\ref{restricted})-(\ref{capitalrestriction}).
\end{theorem}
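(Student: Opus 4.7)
The plan is to show that, among all admissible strategies satisfying the capital guarantee \eqref{capitalrestriction}, the proposed OBPI strategy $(\widehat c, \widehat \pi, \widehat p)$ attains the supremum in \eqref{restricted}. Feasibility is already in hand: Proposition 4.1 gives admissibility, and the chain of inequalities preceding Proposition 4.1 confirms $\widehat X^{(\varrho)}(t) \geq k(t, D(t))$ for all $t \in [0,T]$. What remains is the inequality $J(\widehat c, \widehat \pi, \widehat p) \geq J(c, \pi, p)$ for any competing $(c, \pi, p) \in \mathcal{A}'$ whose wealth satisfies $X(t) \geq k(t, D(t))$.

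First, I would mimic the duality argument from the proof of Theorem 3.1, adapted to incorporate the capital constraint. For any admissible $(c, \pi, p)$ with $X \geq k$, the pointwise concavity inequality
$$U_k(t, x) \leq \widetilde U_k(t, \hat \zeta \Gamma^{\hat\psi}(t)) + \hat\zeta \Gamma^{\hat\psi}(t) x, \qquad k = 1, 2, 3,$$
with $(\hat\zeta, \hat\psi)$ as in Theorem 3.1, summed over the three utility contributions and combined with the budget identity \eqref{xexpectation}, yields an upper bound
$$J(c, \pi, p) \leq \Psi(\hat \zeta, \hat \psi) - \hat \zeta\, P^a_0,$$
where $P^a_0 := P^a_{\varrho_0 Y^*(0),\, 0}(0, T, k+g)$ is the initial price of the hedging American put. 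The correction $-\hat\zeta P^a_0$ reflects the fact that the constraint $X \geq k$ forces any competing strategy to allocate at least the superhedging cost $P^a_0$ to the protection, leaving effective initial wealth $x_0 + g(0) - P^a_0$ available for the unrestricted-style optimization.

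Second, using the homogeneity $U_k(t, \varrho y) = \varrho^{\delta} U_k(t, y)$ of the power utility \eqref{powerutility}, the value of $(\widehat c, \widehat \pi, \widehat p)$ can be computed in terms of $(c^*, \pi^*, p^*)$ from Lemma 3.2 and the random weight $\varrho(\cdot, D(\cdot))$; at the terminal date one has $\widehat X^{(\varrho)}(T) = \max\{\varrho(T, D(T)) Y^*(T) - g(T),\, k(T, D(T))\}$, since the American put is in the money exactly when $\varrho Y^* \leq k + g$. Plugging into the dual functional and invoking the initial budget identity \eqref{obpi0}, namely $\varrho_0 Y^*(0) = x_0 + g(0) - P^a_0$, the concavity inequalities used in the previous step become equalities in the continuation region $\mathcal{C}$, while the null-$dt \otimes d\mathbb{P}$-measure property of $\{(t, \omega) : \varrho(t, D(t)) Y^*(t) \leq b(t, D(t))\}$ noted at the end of the proof of Proposition 4.1 disposes of the stopping region. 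This matches the upper bound $\Psi(\hat\zeta, \hat\psi) - \hat\zeta P^a_0$ and completes the optimality claim.

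The main obstacle is the state-dependent ratcheting weight $\varrho(t, D(t))$. Unlike a static OBPI with a constant fraction, $\varrho$ increases only when $Y^*$ touches the exercise boundary $b(t, D(t))$, so the attainment of the dual bound must be verified via the local-time style argument from Proposition 4.1 together with the identities \eqref{partialh}--\eqref{operatorA}. A secondary subtlety is market incompleteness: one must confirm that the American put, priced under the specific measure $\hat \psi$ selected by the dual problem \eqref{newproblem}--\eqref{newproblemsup}, actually serves as the cheapest admissible hedge of the floor $k + g$ along the OBPI wealth trajectory. For the power utility this is tractable because $\hat\psi$ has already been identified explicitly, so the put price coincides with the relevant hedging cost needed to close the duality gap.
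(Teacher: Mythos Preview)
The paper's own proof consists of a single sentence deferring to Kronborg and Steffensen \cite{steffensen}, Theorem 4.1. That reference does \emph{not} argue via the dual functional $\Psi$; it proceeds by a direct comparison. Given any competing admissible $(c',\pi',p',X')$ with $X'(t)\geq k(t,D(t))$, one uses the concavity of $U$ together with the unrestricted optimality of $(c^*,\pi^*,p^*)$ to compare $(c',\pi',p')$ with $(\widehat c,\widehat\pi,\widehat p)$ pathwise, and the American put enters only through the self-financing and boundary identities already recorded in \eqref{partialh}--\eqref{operatorA}. No separate ``upper bound minus put price'' step appears.

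Your proposal takes a genuinely different route, and the central step does not go through as stated. The inequality
\[
J(c,\pi,p)\;\leq\;\Psi(\hat\zeta,\hat\psi)-\hat\zeta\,P^a_0
\]
for an \emph{arbitrary} constrained strategy is asserted on the basis of the heuristic that ``the constraint $X\geq k$ forces any competing strategy to allocate at least the superhedging cost $P^a_0$ to the protection''. But $P^a_0$ is the $\hat\psi$-price of a put written on the particular process $\varrho Y^*$, not on the competing wealth $X$; a competing strategy satisfying $X\geq k$ need not purchase any option at all, and in this incomplete market there is no reason its effective budget under the single measure $\hat\psi$ should shrink by precisely that amount. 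Weak duality only gives $J(c,\pi,p)\leq\Psi(\hat\zeta,\hat\psi)$; extracting a further $-\hat\zeta P^a_0$ from the pathwise constraint requires an argument you have not supplied. The attainment half is equally incomplete: because $\varrho(t,D(t))$ is a nondecreasing random process, the homogeneity $U(\varrho c^*)=\varrho^{\delta}U(c^*)$ produces $\varrho(t)^{\delta}$ inside the time integral and $\varrho(T)^{\delta}$ at the terminal date, not the constant $\varrho_0^{\delta}$ that would match a bound expressed through $\varrho_0 Y^*(0)=x_0+g(0)-P^a_0$. The Fenchel inequalities you invoke are strict whenever $\widehat X(T)=k(T,D(T))\neq I_3(\hat\zeta\Gamma^{\hat\psi}(T))$, so the claimed matching fails on that event, and the null-measure remark at the end of Proposition 4.1 concerns the running boundary, not the terminal set. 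To repair the argument you should follow the direct concavity comparison of \cite{steffensen}, or else make the duality route rigorous by introducing a Lagrange multiplier for the running constraint and identifying it with the early-exercise premium of the put.
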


\begin{proof}
Similar to that in \cite{steffensen}, Theorem 4.1.

\end{proof}

\section{Conclusion}
The paper focused on an optimal investment-consumption insurance with capital constraints, specifically the American capital guarantee. We solved our problem in the jump diffusion framework using the martingale approach. An explicit solution was obtained in the power utility case.

\subsection*{Acknowledgment}
We would like to express our deep gratitude to the University of Pretoria and the MCTESTP Mozambique for their support. We also wish to thank the anonymous referee for his comments and critics during the reviewing process, they really helped a lot for the improvement of this paper.

\end{document}